\def\BibTeX{{\rm B\kern-.05em{\sc i\kern-.025em b}\kern-.08em
    T\kern-.1667em\lower.7ex\hbox{E}\kern-.125emX}}
\newtheorem{definition}{Definition}
\newtheorem{theorem}{Theorem}
\newcommand{\RIGHT}{\rightarrow}
\newcommand{\LEFT}{\leftarrow}
\newcommand{\UP}{\uparrow}
\newcommand{\DOWN}{\downarrow}
\newcommand{\ALG}{{\cal A}}
\newcommand{\PROB}{{\cal P}}
\newcommand{\W}{{\textsf W}}
\newcommand{\G}{{\textsf G}}
\newcommand{\B}{{\textsf B}}
\newcommand{\V}{{\cal V}}
\title{Terminating Grid Exploration with Myopic Luminous Robots}
\author[1]{Shota Nagahama}
\author[1]{Fukuhito Ooshita}
\author[1]{Michiko Inoue}
\affil[1]{Nara Institute of Science and Technology, Japan}
\date{}
\begin{document}



\maketitle

\begin{abstract}
We investigate the terminating grid exploration for autonomous myopic luminous robots. Myopic robots mean that they can observe nodes only within a certain fixed distance, and luminous robots mean that they have light devices that can emit colors. First, we prove that, in the semi-synchronous and asynchronous models, three myopic robots are necessary to achieve the terminating grid exploration if the visible distance is one. Next, we give fourteen algorithms for the terminating grid exploration in various assumptions of synchrony (fully-synchronous, semi-synchronous, and asynchronous models), visible distance, the number of colors, and a chirality. Six of them are optimal in terms of the number of robots.
\end{abstract}


\section{Introduction}

\subsection{Background and motivation}
Many studies about cooperation of autonomous mobile robots have been conducted in the field of distributed computing.
These studies focus on the minimum capabilities of robots that permit to achieve a given task.
To model operations of robots, the \emph{Look-Compute-Move (LCM) model}~\cite{Suzuki99:Distributed} is commonly used.
In the LCM model, each robot repeats cycles of Look, Compute, and Move phases. In the Look phase, the robot observes positions of other robots. In the Compute phase, the robot executes its algorithm using the observation as its input, and decides whether it moves somewhere or stays idle. In the Move phase, it moves to a new position if the robot decided to move in the Compute phase.
To consider minimum capabilities, most studies assume that robots are \emph{identical} (\emph{i.e.}, robots execute the same algorithm and have no identifier), \emph{oblivious} (\emph{i.e.}, robots have no memory to record their past history), and \emph{silent} (\emph{i.e.}, robots do not have communication capabilities). 
Furthermore, they have \emph{no global compass}, \emph{i.e.}, they do not agree on the directions.
Based on the LCM model, previous works clarified solvability of many tasks such as exploration, gathering, and pattern formation in continuous environments (aka two- or three-dimensional Euclidean space) and discrete environments (aka graph networks) (see a survey~\cite{flocchini2019distributed}).

In this paper, we focus on \emph{exploration} in graph networks, which is one of the most central tasks for mobile robots.
Two variants of exploration tasks have been well studied: the \emph{perpetual} exploration requires every robot to visit every node infinitely many times, and the \emph{terminating} exploration requires robots to terminate after every node is visited by a robot at least once. 
During the last decade, many works have considered the perpetual and terminating exploration on the assumption that each robot has unlimited visibility, \emph{i.e.}, it observes all other robots in the network.
The perpetual exploration has been studied for rings~\cite{Blin10:Exclusive} and grids~\cite{Bonnet11:Asynchronous}.
The terminating exploration has been studied for lines~\cite{flocchini2011how}, rings~\cite{Devismes13:Optimal,Flocchini13:Computing, lamani2010optimal}, trees~\cite{Flocchini10:Remembering}, finite grids~\cite{Devismes12:Optimal,Devismes21:Terminating}, tori~\cite{Devismes19:Optimal}, and arbitrary networks~\cite{Chalopin10:Network}. 
However, the capability of the unlimited visibility seems powerful and somewhat contradicts the principle of weak mobile robots. 
For this reason, some studies consider the more realistic case of \emph{myopic} robots~\cite{Datta13:Ring,Datta13:Ring23}. 
A myopic robot has limited visibility, \emph{i.e.}, it can see nodes (and robots on them) only within a certain fixed distance $\phi$. 
Datta et al.\ studied the terminating exploration of rings for $\phi=1$~\cite{Datta13:Ring} and $\phi=2,3$~\cite{Datta13:Ring23}. 
Not surprisingly, since myopic robots are weaker than non-myopic robots, many impossibility results are given for myopic robots.

To improve the task solvability, myopic robots with persistent visible light~\cite{Das16:Autonomous}, called myopic \emph{luminous} robots, have attracted a lot of attention.
Each myopic luminous robot is equipped with a light device that can emit a constant number of colors to other robots, a single color at a time. 
The light color is persistent, \emph{i.e.}, it is not automatically reset at the end of each cycle, and hence it can be used as a constant-space memory.

Ooshita and Tixeuil~\cite{Ooshita21:Ring} studied the perpetual and terminating exploration of rings for $\phi=1$ in the synchronous (FSYNC), semi-synchronous (SSYNC), and asynchronous (ASYNC) models.
They showed that the number of robots required to achieve the tasks can be reduced compared to non-luminous robots.
Nagahama et al.~\cite{nagahama2019ring} 
studied the same problem in case of $\phi\geq 2$ and showed that, in the SSYNC and ASYNC models, the number of robots required to achieve the tasks can be reduced compared to the case of $\phi=1$.

Bramas et al.\ studied the exploration of an infinite grid with myopic luminous and non-luminous robots in the FSYNC model~\cite{Bramas20:Infinite,Bramas20:Finding}.
Here they propose algorithms so that every node of an infinite grid is visited by a robot at least once.
In \cite{Bramas20:Infinite} robots agree on a \emph{common chirality}, \emph{i.e.},
robots agree on common clockwise and counterclockwise directions.
Bramas et al.~\cite{bramas2021optimal} also studied the perpetual exploration of a (finite) grid with myopic luminous and non-luminous robots in the FSYNC model on the assumption that robots agree on a common chirality.
Algorithms proposed in \cite{bramas2021optimal} have additional nice properties: they work even if robots are opaque (\emph{i.e.}, a robot is able to see another robot only if no other robot lies in the line segment joining them), and they are exclusive (\emph{i.e.}, no two robots occupy a single node during the execution).
This work also describes the way to extend their algorithms to acheive the terminating exploration and/or to work in the SSYNC and ASYNC models.
More concretely, this gives three algorithms to achieve the terminating exploration of a grid in case of a common chirality: algorithms for two robots with $\phi=1$ and six colors in the FSYNC model, two robots with $\phi=2$ and five colors in the FSYNC model, and two robots with $\phi=2$ and six colors in the SSYNC and ASYNC models.
However algorithms with a fewer number of colors or no common chirality are not known yet.

\subsection{Our contributions}
We focus on the terminating exploration of a (finite) grid with myopic luminous and non-luminous robots, and clarify lower and upper bounds of the required number of robots in various assumptions of synchrony, visible distance $\phi$, the number of colors, and a chirality.
Table \ref{table:summary} summarizes our contributions.

First, we prove that, in the SSYNC and ASYNC models, three myopic robots are necessary to achieve the terminating exploration of a grid if $\phi=1$ holds.
Note that this lower bound also holds for the perpetual exploration because we prove that robots cannot visit some nodes of a grid in this case.
Other lower bounds in Table \ref{table:summary} are given by Bramas et al.~\cite{bramas2021optimal}.
They are originally given as impossibility results for the perpetual exploration, however they still hold for the terminating exploration.
This is because Bramas et al.\ prove that, if the number of robots is smaller in each assumption, robots cannot visit some nodes.

Second, we propose algorithms to achieve the terminating exploration of a grid in various assumptions in Table \ref{table:summary}.
To the best of our knowledge, they are the first algorithms that achieve the terminating exploration of a grid by myopic robots with at most three colors and/or with no common chirality.
In addition, six proposed algorithms are optimal in terms of the number of robots.

\begin{table}[tb]
\centering
\caption{Terminating grid exploration with myopic robots. Notation $\phi$ represents the visible distance of a robot, $\ell$ represents the number of colors, and $*$ means the number of robots is minimum.}
\label{table:summary}
\begin{center}
\begin{tabular}{|c|c|c|c|>{\centering}p{1cm}c|>{\centering}p{1cm}c|}
\hline
\multirow{2}{*}{Synchrony} &
\multirow{2}{*}{$\phi$} &
\multirow{2}{*}{$\ell$} &
Common & 
\multicolumn{4}{c|}{\#required robots} \\ \cline{5-8}
 & & & chirality & 
\multicolumn{2}{c|}{Lower bound} & \multicolumn{2}{c|}{Upper bound} \\ \hline
\multirow{8}{*}{FSYNC} & \multirow{4}{*}{2} & \multirow{2}{*}{2} & yes & 2 & \cite{bramas2021optimal}& $\mathbf{2^*}$ & \S\,\ref{secF22T2}\\ \cline{4-8} 
 & & & no & 2 & \cite{bramas2021optimal}& $\mathbf{3}$ &\S\,\ref{secF22F3}\\ \cline{3-8} 
 & & \multirow{2}{*}{1} & yes & 3 & \cite{bramas2021optimal} & $\mathbf{3^*}$ & \S\,\ref{secF21T3}\\ \cline{4-8} 
 & & & no & 3 & \cite{bramas2021optimal} & $\mathbf{4}$  & \S\,\ref{secF21F4}\\ \cline{2-8} 
 & \multirow{4}{*}{1} & \multirow{2}{*}{3} & yes & 2 & \cite{bramas2021optimal}& $\mathbf{2^*}$ & \S\,\ref{secF13T2}\\ \cline{4-8} 
 & & & no & 2 & \cite{bramas2021optimal}& $\mathbf{4}$ & \S\,\ref{secF13F4}
 \\ \cline{3-8} 
 & & \multirow{2}{*}{2} & yes & 3 & \cite{bramas2021optimal} & $\mathbf{3^*}$ & \S\,\ref{secF12T3}\\ \cline{4-8} 
 & & & no & 3 & \cite{bramas2021optimal} & $\mathbf{5}$ & \S\,\ref{secF12F5}
 \\ \hline
 & \multirow{4}{*}{2} & \multirow{2}{*}{3} & yes & 2 & \cite{bramas2021optimal}& $\mathbf{2^*}$ &\S\,\ref{secA23T2}\\ \cline{4-8} 
 & & & no & 2 & \cite{bramas2021optimal}& $\mathbf{3}$ & \S\,\ref{secA23F3}\\ \cline{3-8} 
 SSYNC & & \multirow{2}{*}{2} & yes & 2 & \cite{bramas2021optimal}& $\mathbf{3}$ & \S\,\ref{secA22T3}
 \\ \cline{4-8} 
 ASYNC & & & no & 2 & \cite{bramas2021optimal}& $\mathbf{4}$ & \S\,\ref{secA22F4}
 \\ \cline{2-8} 
 & \multirow{2}{*}{1} & \multirow{2}{*}{3} & yes & $\mathbf{3}$ & \S\,\ref{secImpSSYNC} & $\mathbf{3^*}$ & \S\,\ref{secA13T3}\\ \cline{4-8} 
 & & & no & $\mathbf{3}$ & \S\,\ref{secImpSSYNC} & $\mathbf{6}$ & \S\,\ref{secA13F6}
 \\ \hline
\end{tabular}
\end{center}
\end{table}

\section{Preliminaries}

\subsection{System model}
The system consists of $k$ mobile robots and a simple connected graph $G=(V, E)$, where $V$ is a set of nodes and $E$ is a set of edges.
In this paper, we assume that $G$ is a finite $m\times n$ grid (or a grid, for short) where $m$ and $n$ are two positive integers, \textit{i.e.}, $G$ satisfies the following conditions:
\begin{itemize}
  \item $V = \{v_{i,j} \,|\, i\in\{0,1,\ldots,m-1\},\, j\in\{0,1,\ldots,n-1\}\}$
  \item $E = \{(v_{i, j}, v_{i', j'}) \,|\, v_{i, j}, v_{i', j'}\in V,\, |i-i'|+|j-j'|=1\}$
\end{itemize}
The indices of nodes are used for notation purposes only and robots do not know them.
Neither nodes nor edges have identifiers or labels, and consequently robots cannot distinguish nodes and cannot distinguish edges. 
Robots do not know $m$ or $n$. 
Figure \ref{global-direction} shows global directions labeled by North, East, South, and West on a grid.
Note that these directions are used only for explanations, and robots cannot access the global directions.
Each robot is on a node of $G$ at each instant.
When a robot $r$ is on a node $v$, we say $r$ \textit{occupies} $v$ and $v$ \textit{hosts} $r$.
The distance between two nodes is the number of edges in a shortest path between the nodes. The distance between two robots $r_1$ and $r_2$ is the distance between two nodes occupied by $r_1$ and $r_2$. Two robots $r_1$ and $r_2$ are neighbors if the distance between $r_1$ and $r_2$ is one. 

\begin{figure}[tb]
\begin{center}
\includegraphics[scale=1.0]{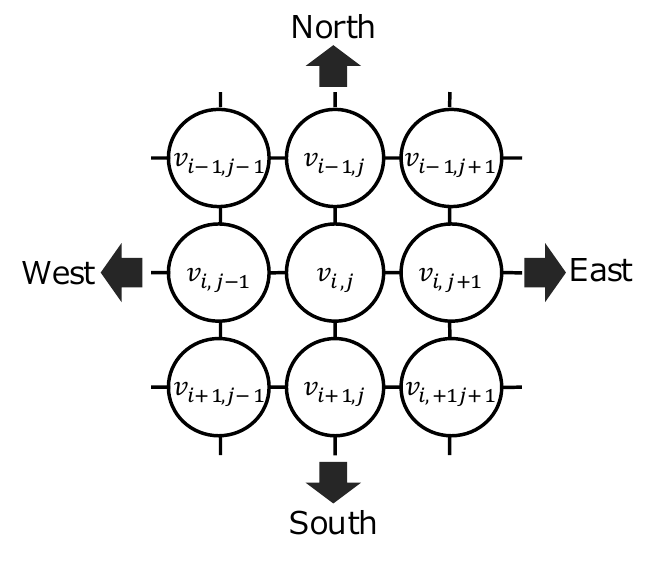}
\end{center}
\caption{Global directions on a grid}
\label{global-direction}
\end{figure}

Robots we consider have the following characteristics and capabilities. Robots are \textit{identical}, that is, robots execute the same deterministic algorithm and do \textit{not} have unique identifiers. Robots are \textit{luminous}, that is, each robot has a light (or state) that is visible to itself and other robots. A robot can choose the color of its light from a discrete set $Col$. When the set $Col$ is finite, $\ell$ denotes the number of available colors (\textit{i.e.}, $\ell=|Col|$). 
Robots have no other persistent memory and cannot remember the history of past actions. 
Each robot can communicate by observing positions and colors of other robots (for collecting information), and by changing its color and moving (for sending information).
Robots are \textit{myopic}, that is, each robot $r$ can observe positions and colors of robots within a fixed distance $\phi$ ($\phi>0$ but $\phi\neq\infty$) from its current position. Since robots are identical, they share the same $\phi$.
Each robot distinguishes clockwise and counterclockwise directions according to its own \textit{chirality}.
The robots agree on a common clockwise direction if and only if they agree on a common chirality.

Each robot executes an algorithm by repeating three-phase cycles: Look, Compute, and Move phases. During the \textit{Look} phase, the robot takes a snapshot of positions and colors of robots within distance $\phi$. During the \textit{Compute} phase, the robot computes its next color and movement according to the observation in the Look phase. The robot may change its color at the end of the Compute phase. If the robot decides to move, it moves instantaneously to a neighboring node during the \textit{Move} phase. 
To model asynchrony of executions, we introduce the notion of \textit{scheduler} that decides when each robot executes phases. When the scheduler makes robot $r$ execute some phase, we say the scheduler activates the phase of $r$ or simply activates $r$. We consider three types of synchronicity: the FSYNC (fully synchronous) model, the SSYNC (semi-synchronous) model, and the ASYNC (asynchronous) model. 
In all models, time is represented by an infinite sequence of instants $0,1,2,...$.
No robot has access to this global time.
In the FSYNC and SSYNC models, all the robots that are activated at an instant $t$ execute a full cycle synchronously and concurrently between $t$ and $t+1$.
In the FSYNC model, at every instant, the scheduler activates all robots. 
In the SSYNC model, at every instant, the scheduler selects a non-empty subset of robots and activates the selected robots.
In the ASYNC model, the scheduler activates cycles of robots asynchronously: the time between Look, Compute, and Move phases is finite but unpredictable.
Note that in the ASYNC model, a robot $r$ can move based on the outdated view obtained during the previous Look phase. Throughout the paper we assume that the scheduler is \textit{fair}, that is, each robot is activated infinitely often. 

\subsection{Configuration, view, and algorithm}

\paragraph{Configuration.}
A configuration represents positions and colors of all robots.
At instant $t$, let $Q(t)$ be the set of occupied nodes, and let $M_{i, j}(t)$ be the multiset of colors of robots on node $v_{i, j} \in Q(t)$.
A \textit{configuration} $C(t)$ of the system at instant $t$ is defined as $C(t)=\left\{\left(v_{i,j},M_{i,j}(t)\right) \mid v_{i,j}\in Q(t)\right\}$.
If $t$ is clear from the context, we simply write $Q$, $M_{i,j}$ and $C$ instead of $Q(t)$, $M_{i,j}(t)$, and $C(t)$, respectively.

\paragraph{View.}
When a robot takes a snapshot of its environment, it gets a \textit{view} up to distance $\phi$.
Consider a robot $r$ on node $v_{i, j}$. 
Let $c_r$ be a color of $r$.
We describe $M_{i', j'}=\bot$ if node $v_{i',j'}$ does not exist, that is, $i'\notin\{0,1,\ldots,m-1\}$ or $j'\notin\{0,1,\ldots,n-1\}$ holds.
Since $r$ does not know the global direction, it obtains one of the following four views in case of $\phi=1$ and a common chirality:
\begin{itemize}
  \item North view: $\V_{1,\nu} = (c_r,M_{i-1,j},M_{i,j-1},M_{i,j},M_{i,j+1},M_{i+1,j})$
  \item East view: $\V_{1,e} = (c_r,M_{i,j+1},M_{i-1,j},M_{i,j},M_{i+1,j},M_{i,j-1})$
  \item South view: $\V_{1,s} = (c_r,M_{i+1,j},M_{i,j+1},M_{i,j},M_{i,j-1},M_{i-1,j})$
  \item West view: $\V_{1,w} = (c_r,M_{i,j-1},M_{i+1,j},M_{i,j},M_{i-1,j},M_{i,j+1})$
\end{itemize}
In case of $\phi=1$ and no common chirality, $r$ obtains one of eight views, which include the above four views and the mirror images of them:
\begin{itemize}
  \item Mirror image of $\V_{1,\nu}$: \\$\V_{1,\nu,\mu} = (c_r,M_{i-1,j},M_{i,j+1},M_{i,j},M_{i,j-1},M_{i+1,j})$
  \item Mirror image of $\V_{1,e}$: \\$\V_{1,e,\mu} = (c_r,M_{i,j+1},M_{i+1,j},M_{i,j},M_{i-1,j},M_{i,j-1})$
  \item Mirror image of $\V_{1,s}$: \\$\V_{1,s,\mu} = (c_r,M_{i+1,j},M_{i,j-1},M_{i,j},M_{i,j+1},M_{i-1,j})$
  \item Mirror image of $\V_{1,w}$: \\$\V_{1,w,\mu} = (c_r,M_{i,j-1},M_{i-1,j},M_{i,j},M_{i+1,j},M_{i,j+1})$
\end{itemize}
When $r$ obtains one of the views, it cannot recognize which view it obtains, however it can compute other views by rotating and/or flipping the view.
Hence, we assume that, in case of a common chirality, $r$ obtains four views $\V_{1,\nu}, \V_{1,e}, \V_{1,s}, \V_{1,w}$ when it takes a snapshot.
Note that $r$ does not recognize which view corresponds to each of North, East, South, and West views.
Similarly, we assume that, in case of no common chirality, $r$ obtains eight views $\V_{1,\nu}, \V_{1,e}, \V_{1,s}, \V_{1,w}, V_{1,\nu,\mu}, \V_{1,e,\mu}, \V_{1,s,\mu}, \V_{1,w,\mu}$ when it takes a snapshot.

Similarly, in case of $\phi=2$ and a common chirality, $r$ obtains the following four views.
\begin{itemize}
  \item North view: $\V_{2,\nu}=(c_r,M_{i-2,j},M_{i-1,j-1},M_{i-1,j},M_{i-1,j+1},M_{i,j-2},M_{i,j-1},M_{i,j},M_{i,j+1},\\M_{i,j+2},M_{i+1,j-1},M_{i+1,j},M_{i+1,j+1},M_{i+2,j})$
  \item East view: $\V_{2,e}=(c_r,M_{i,j+2},M_{i-1,j+1},M_{i,j+1},M_{i+1,j+1},M_{i-2,j},M_{i-1,j},M_{i,j},M_{i+1,j},\\M_{i+2,j},M_{i-1,j-1},M_{i,j-1},M_{i+1,j-1},M_{i,j-2})$
  \item South view: $\V_{2,s}=(c_r,M_{i+2,j},M_{i+1,j+1},M_{i+1,j},M_{i+1,j-1},M_{i,j+2},M_{i,j+1},M_{i,j},M_{i,j-1},\\M_{i,j-2},M_{i-1,j+1},M_{i-1,j},M_{i-1,j-1},M_{i-2,j})$
  \item West view: $\V_{2,w}=(c_r,M_{i,j-2},M_{i+1,j-1},M_{i,j-1},M_{i-1,j-1},M_{i+2,j},M_{i+1,j},M_{i,j},M_{i-1,j},\\M_{i-2,j},M_{i+1,j+1},M_{i,j+1},M_{i-1,j+1},M_{i,j+2})$
\end{itemize}
In case of $\phi=2$ and no common chirality, $r$ obtains eight views, which include the above four views and the mirror images of them:
\begin{itemize}
  \item Mirror image of $\V_{2,\nu}$: $\V_{2,\nu,\mu}=(c_r,M_{i-2,j},  M_{i-1,j+1},M_{i-1,j},M_{i-1,j-1},  M_{i,j+2},M_{i,j+1},M_{i,j},\\M_{i,j-1},M_{i,j-2},  M_{i+1,j+1},M_{i+1,j},M_{i+1,j-1},  M_{i+2,j})$
  \item Mirror image of $\V_{2,e}$: $\V_{2,e,\mu}=(c_r,M_{i,j+2},  M_{i+1,j+1},M_{i,j+1},M_{i-1,j+1},  M_{i+2,j},M_{i+1,j},M_{i,j},\\M_{i-1,j},M_{i-2,j},  M_{i+1,j-1},M_{i,j-1},M_{i-1,j-1},  M_{i,j-2})$
  \item Mirror image of $\V_{2,s}$: $\V_{2,s,\mu}=(c_r,M_{i+2,j},  M_{i+1,j-1},M_{i+1,j},M_{i+1,j+1},  M_{i,j-2},M_{i,j-1},M_{i,j},\\M_{i,j+1},M_{i,j+2},  M_{i-1,j-1},M_{i-1,j},M_{i-1,j+1},  M_{i-2,j})$
  \item Mirror image of $\V_{2,w}$: $\V_{2,w,\mu}=(c_r,M_{i,j-2},  M_{i-1,j-1},M_{i,j-1},M_{i+1,j-1},  M_{i-2,j},M_{i-1,j},M_{i,j},\\M_{i+1,j},M_{i+2,j},  M_{i-1,j+1},M_{i,j+1},M_{i+1,j+1},  M_{i,j+2})$
\end{itemize}

\paragraph{Algorithm.}
An algorithm is described as a set of rules. 
Each rule is represented as a combination of a label, a guard, and an action. 
The guard represents possible views obtained by a robot.
Recall that robot $r$ obtains several views during the Look phase.
If some view of robot $r$ matches a guard in some rule, we say $r$ is enabled. 
We also say the rule with the corresponding label is enabled. 
If $r$ is enabled, $r$ can execute the corresponding action (\emph{i.e.}, change its color and/or move to its neighboring node) based on the directions of the matched view during Compute and Move phases.
If several views of $r$ match some guard or some view of $r$ matches several guards, one combination of a view and a rule is selected by the scheduler. 

\subsection{Execution and problem}

\paragraph{Execution.}
An execution from initial configuration $C_0$ is a maximal sequence of configurations $E=C_0,C_1,...,C_i,...$ such that, for any $j>0$, we have 
\textit{(i)} $C_{j-1}\neq C_j$, 
\textit{(ii)} $C_j$ is obtained from $C_{j-1}$ after some robots move or change their colors, and
\textit{(iii)} for every robot $r$ that moves or changes its color between $C_{j-1}$ and $C_j$, there exists $0\leq j^\prime < j$ 
such that $r$ takes its decision to move or change its color according to its algorithm and its view in $C_{j^\prime}$.
The term ``\textit{maximal}'' means that the execution is either infinite or ends in a \textit{terminal configuration}, \textit{i.e.}, a configuration in which no robot is enabled.

\paragraph{Problem.}
A problem $\PROB$ is defined as a set of executions: An execution $E$ solves $\PROB$ if $E\in \PROB$ holds. An algorithm $\ALG$ solves problem $\PROB$ from initial configuration $C_0$ if any execution from $C_0$ solves $\PROB$. We simply say an algorithm $\ALG$ solves problem $\PROB$ if there exists an initial configuration $C_0$ such that $\ALG$ solves $\PROB$ from $C_0$. 
In this paper, we consider the terminating exploration problem.


\begin{definition}[\textbf{Terminating exploration problem}]
The terminating exploration is defined as a set of executions $E$ such that 1) every node is visited by at least one robot in $E$ and 2) there exists a suffix of $E$ such that no robots are enabled.
\end{definition}

\subsection{Descriptions}
For simplicity, we describe a rule in an algorithm with a figure in Fig.\,\ref{ruleFig}.
Figure\,\ref{ruleFig}(a) represents a rule of an algorithm in case of $\phi=1$.
Figure\,\ref{ruleFig}(b) represents a rule in case of $\phi=2$.
Each graph in Fig.\,\ref{ruleFig} represents a guard.
The guard in Fig.\,\ref{ruleFig}(a) represents a view $\V_1=(c_r,M_{i-1,j},M_{i,j-1},M_{i,j},M_{i,j+1},M_{i+1,j})$, and similarly the guard in Fig.\,\ref{ruleFig}(b) represents a view $\V_2$. 
If $M_{i',j'}=\emptyset$ holds, we paint the corresponding node white instead of writing $\emptyset$.
If $M_{i',j'}=\bot$ holds, we paint the corresponding node black instead of writing $\bot$.
If both $\emptyset$ and $\bot$ are acceptable, we paint the corresponding node gray.
If some view of robot $r$ with visible distance $\phi$ matches $\V_\phi$, $r$ is enabled.
In this case, if the scheduler activates $r$, it executes an action represented by $c_{new},Movement$.
Notation $c_{new}$ represents a new color of the robot.
Notation $Movement$ can be $Idle$, $\LEFT$, $\RIGHT$, $\UP$, $\DOWN$ and represents the movement: $Idle$ implies a robot does not move, and $\LEFT$ (resp., $\RIGHT$, $\UP$, $\DOWN$) implies a robot moves toward the node corresponding to $M_{i,j-1}$ (resp., $M_{i,j+1}$, $M_{i-1,j}$, $M_{i+1,j}$) of the guard.

\begin{figure}[t!]
\begin{center}
\includegraphics[scale=1.0]{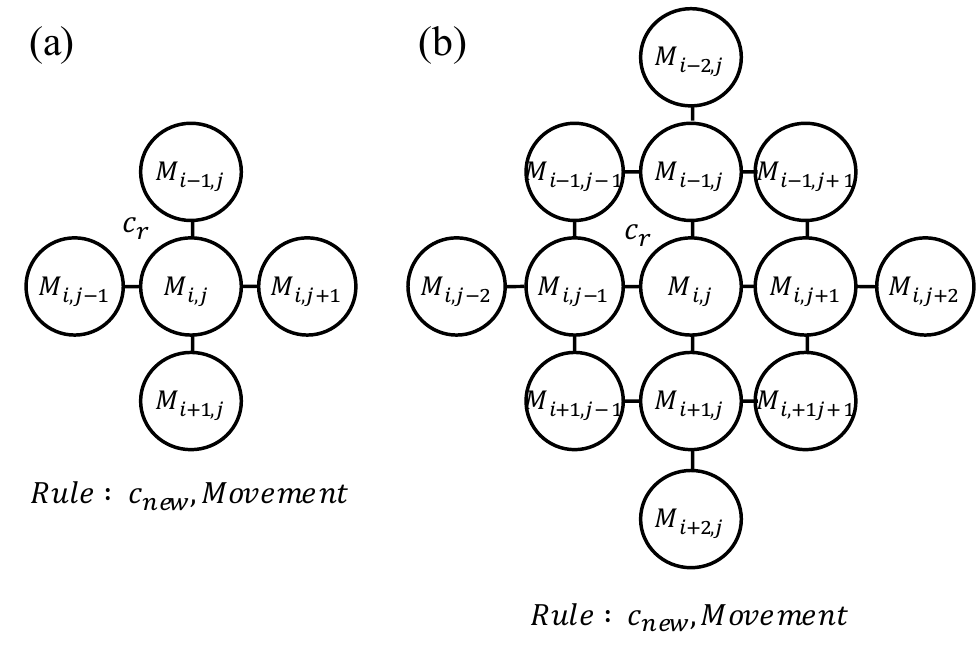}
\end{center}
\caption{Description of a rule in an algorithm}
\label{ruleFig}
\end{figure}

\section{An Impossibility result}
\label{secImpSSYNC}

In this section, we prove that, in the SSYNC model, two robots cannot achieve the terminating exploration if $\phi=1$ holds.
Since executions in the SSYNC model can happen in the ASYNC model, this impossibility also holds in the ASYNC model.
This implies that, in case of $\phi=1$, at least three robots are necessary to achieve the terminating exploration of grids in the SSYNC and ASYNC models.
In the following, we use terms of end nodes and inner nodes.
We say node $v$ is an end node if the degree of $v$ is smaller than four.
We say node $v$ is an inner node if the distance from $v$ to every end node is at least three.

\begin{theorem}
\label{grid-imp-ssync}
In case of $\phi=1$ and $k=2$, no algorithm solves the terminating exploration of grids in the SSYNC model. This holds regardless of the number of colors and a common chirality.
\end{theorem}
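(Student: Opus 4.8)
The plan is to argue by contradiction with an adversarial scheduler: assuming some algorithm $\ALG$ solved the problem, I would exhibit, on a sufficiently large grid, an SSYNC execution in which some \emph{inner} node is never occupied, violating requirement~1) of the terminating exploration (and thereby also the perpetual version). Since $\phi=1$, the decisive structural fact is that two robots affect each other's views only while they are neighbors; as soon as their distance is at least two, each robot's snapshot contains no other robot and is therefore determined solely by its own color together with the local boundary pattern (which surrounding entries equal $\bot$). In particular, an \emph{isolated} robot sitting at distance at least two from every end node has the pure-interior view in which all four neighbors exist and are empty, so its computed color and movement depend \emph{only} on its current color.

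First I would prove the key lemma on isolated robots. Treat the pair, once separated to distance at least two, as two \emph{independent} agents, each of which senses only its own color and one of finitely many local terrain types (interior, edge, corner, determined by the $\bot$ pattern). The transition ``(color, terrain) $\mapsto$ (new color, movement)'' is a fixed finite map, so along any maximal interior stretch the color sequence $c_0,c_1,\dots$ walks in the finite transition graph on $Col$ and is eventually periodic with period $p\le\ell$; one period realizes a fixed cyclic move sequence with a fixed net displacement $\vec\delta$. A pumping argument on this finite system (interior stretches glued at the finitely many boundary-terrain types) then shows the whole time-trajectory of such an agent is eventually periodic, hence the set of nodes it ever visits is either bounded (when the per-period displacement is $\vec 0$) or a strip of bounded width drifting in a single direction otherwise; either way it contains only $O(N)$ nodes on an $N\times N$ grid. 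Crucially, because an isolated interior robot ignores its position, it has no way to ``home in'' on its partner, so the two cannot re-coordinate while separated.

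Next I would describe the scheduler. Exploiting the SSYNC freedom to activate any non-empty subset, the adversary first forces the robots to distance at least two (if they start adjacent it activates them in an order that pulls them apart, using that a robot whose partner has just vanished cannot recover the partner's former direction) and thereafter maintains separation: whenever activating a robot would bring it within distance one of the other, it activates only the other robot instead. Because by the lemma the isolated moves are predetermined by colors, the scheduler can compute a safe activation order in advance while still activating each robot infinitely often. From then on both robots are permanently isolated interior walkers, so the set of nodes they ever occupy lies in the union of two bounded-or-linear trajectories, totalling $O(N)$ nodes. Choosing $N$ large enough that the inner region (nodes at distance at least three from every end node, which has $\Omega(N^2)$ nodes) exceeds this union yields an inner node never visited, the desired contradiction. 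Since the argument uses only the finiteness of $Col$ and the position-independence of the interior view, it is insensitive to $\ell$ and to chirality: a mirror-symmetric view set merely enlarges the family of guards forced to behave identically, which can only help the adversary.

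The step I expect to be the main obstacle is making separation-maintenance rigorous, since this is the SSYNC-specific crux. I must show the scheduler can \emph{always} keep the two robots at distance at least two forever while remaining fair, including across the transient before the color sequences become periodic and in the corner cases where both robots' predetermined moves converge toward a single node. This requires a careful case analysis of the possible neighboring configurations and of a just-separated robot's first isolated steps -- precisely where the distance-three buffer in the definition of inner node is used, guaranteeing that a robot leaving its partner's vicinity already has a pure-interior view -- together with the counting that bounds the two trajectories against the $\Omega(N^2)$ inner region.
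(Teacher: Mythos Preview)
Your proposal rests on a misunderstanding of the model that invalidates your key lemma. You assert that for an isolated interior robot ``the transition $(\text{color},\text{terrain})\mapsto(\text{new color},\text{movement})$ is a fixed finite map'' and then deduce an eventually periodic trajectory with a fixed per-period displacement $\vec\delta$. But in this model a robot has no compass: when it takes a snapshot it obtains all four rotated views (eight without chirality), and when several of them match a guard \emph{the scheduler} selects which one is used. For an isolated robot at distance at least two from every end node, all four views are identical (every neighbor exists and is empty), so whenever the algorithm prescribes a move, the \emph{direction} of that move is chosen adversarially. There is thus no well-defined $\vec\delta$, and your $O(N)$-trajectory pumping argument has no foundation.

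The correct observation makes the proof far simpler than what you sketched, and this is exactly what the paper does. Because the scheduler picks the direction of every move of an isolated interior robot, it can force such a robot to oscillate forever between its current node $v$ and one fixed neighbor $v'$. On a grid with $m,n\geq 9$ the paper takes the first instant at which some robot $r_1$ occupies an inner node and case-splits on the distance $d$ to $r_2$. If $d\geq 2$, each robot is confined to a two-node set by the oscillation trick (with a short sub-argument when $r_2$ is still on an end node), so at most four inner nodes are ever visited. If $d\leq 1$, the pair is trapped on two nodes unless some robot eventually moves away while at distance one from its partner; the scheduler then activates only that robot, the distance becomes two, and the previous case applies, bounding the visited inner nodes by five. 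Since there are at least nine inner nodes, the contradiction follows. Your separation-maintenance scheme and the periodicity machinery are unnecessary once you see that the adversary, not the algorithm, controls an isolated interior robot's heading.
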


\begin{proof}
For contradiction, we assume that such an algorithm $\ALG$ exists.
Consider an execution $E = C_0,C_1,...$ of $\ALG$ in a $m\times n$ grid $G$ that satisfies $m\ge 9$ and $n\ge 9$.
Let $i$ be the minimum index such that some robot occupies an inner node at $C_i$.
Let $r_1$ be a robot that occupies an inner node at $C_i$ and $r_2$ be another robot.
Let $d$ be the distance between $r_1$ and $r_2$ at $C_i$.
We consider two cases: (1) $d \geq 2$ and (2) $d\leq 1$.

Consider Case 1, that is, $d \geq 2$ holds.
Let $v_1$ and $v_2$ be nodes that host $r_1$ and $r_2$, respectively, at $C_i$.
We further consider two sub-cases: (1-1) $v_2$ is not an end node, and (1-2) $v_2$ is an end node.
First assume that $v_2$ is not an end node (Case 1-1).
In this case, we can define nodes $v'_1$ and $v'_2$ such that $v'_1$ is a neighbor of $v_1$, $v'_2$ is a neighbor of $v_2$, $v'_2$ is not an end node, the distance between nodes $w_1$ and $w_2$ is at least two for any $w_1\in\{v_1,v'_1\}$ and any $w_2\in\{v_2,v'_2\}$. 
Then we can prove that the scheduler makes $r_1$ and $r_2$ stay on nodes in $\{v_1,v'_1\}$ and $\{v_2,v'_2\}$, respectively, forever after $C_i$.
Consider configuration $C$ such that $r_1$ and $r_2$ stay on nodes in $\{v_1,v'_1\}$ and $\{v_2,v'_2\}$, respectively.
Since $r_1$ and $r_2$ cannot observe each other and they are not on end nodes, $r_x$ ($x\in\{1,2\}$) cannot distinguish directions, that is, $r_x$ obtains four identical views when it takes a snapshot.
This implies that, when $r_x$ moves, the scheduler can decide which direction $r_x$ moves toward.
Hence, if $r_1$ moves, the scheduler can move $r_1$ to another node in $\{v_1,v'_1\}$.
Similarly, if $r_2$ moves, the scheduler can move $r_2$ to another node in $\{v_2,v'_2\}$.
This implies that, at the configuration after $C$, $r_1$ and $r_2$ stay on nodes in $\{v_1,v'_1\}$ and $\{v_2,v'_2\}$, respectively.
Hence, inductively, after $C_i$, robots $r_1$ and $r_2$ continue to stay on nodes in $\{v_1,v'_1\}$ and $\{v_2,v'_2\}$, respectively.
This means that robots can visit at most two inner nodes until $C_i$ and visit at most two other inner nodes after $C_i$.
Since the number of inner nodes in $G$ is at least nine, robots cannot achieve the terminating exploration.
Next assume that $v_2$ is an end node (Case 1-2).
Let $v'_1$ be an inner node that is a neighbor of $v_1$.
Similarly to Case 1-1, we can prove that, if $r_1$ never observes $r_2$, $r_1$ continues to stay on nodes in $\{v_1,v'_1\}$.
This implies that, to achieve the terminating exploration, $r_2$ moves toward $r_1$ or visits the remaining nodes by itself.
In any case, $r_2$ leaves from end nodes, which reduces to Case 1-1.

Consider Case 2, that is, $d\leq 1$ holds.
Let $v_1$ be a node that hosts $r_1$.
Let $v_2$ be a node that hosts $r_2$ if $d=1$, and a neighbor of $v_1$ if $d=0$.
We can prove that, as long as each robot moves toward another robot or stays on its current node, robots continue to stay on nodes in $\{v_1,v_2\}$: if two robots stay on different nodes, they can only move toward another node, and if two robots stay on a single node $v_1$ or $v_2$, the scheduler can move them to another node in $\{v_1,v_2\}$.
Hence, eventually a robot moves to another node, say $v_3$, when the distance between two robots is one.
In this moment, the scheduler activates only this robot.
After the movement, the distance between $r_1$ and $r_2$ is two.
Similarly to Case 1, after the configuration, robots can visit only two other inner nodes.
This implies that robots can visit at most two inner nodes ($v_1$ and $v_2$) until $C_i$ and visit at most three other inner nodes ($v_3$ and two other inner nodes) after $C_i$.
Since the number of inner nodes in $G$ is at least nine, they cannot achieve the terminating exploration.

This is a contradiction.
\end{proof}

Note that this impossibility result also holds for the perpetual exploration because the proof of Theorem \ref{grid-imp-ssync} shows that robots cannot visit some nodes in this case.

\section{Terminating Grid Exploration Algorithms}

\subsection{Overview}

In this subsection, we give the overview of our algorithms.
All of our algorithms make robots explore the grid according to the arrow in Fig.\,\ref{routeFig}.
In other words, robots start exploration from the northwest corner and repeat the following behaviors:
\begin{enumerate}
  \item Proceed east: Robots go straight to the east end of the grid.
  \item Turn west: They go one step south and turn west.
  \item Proceed west: Robots go straight to the west end of the grid.
  \item Turn east: They go one step south and turn east.
\end{enumerate}

In each algorithm, we implement the behaviors of proceeding and turning.
While proceeding, robots recognize their forward direction by their form.
In the FSYNC model, since all robots are activated at every instant, they move forward at every instant and keep their initial form.
The robots repeat this behavior until they reach the end of the grid.
On the other hand, in the SSYNC and ASYNC models, not all robots are activated at the same time.
For this reason, we propose the way to make robots move forward by moving a single robot at every instant.

The difficult part is to implement the behaviors of turning. 
Since robots do not know global directions, they must understand the south direction from the local information.
We realize this in two different approaches.
The first approach is to keep robots in two rows when proceeding east or west.
By making different forms in north and south rows, robots distinguish the two directions.
Mainly we use this approach in the case of no common chirality.
The second approach is used only in the case of a common chirality.
In this approach, robots change their form of proceeding depending on the directions.
That is, robots distinguish the east and west directions by their form.
In the case of a common chirality, robots can go south by turning right (resp. left) when they proceed east (resp. west).
In the second approach, robots do not have to keep themselves in two rows when proceeding.
This is the main reason why we can reduce the number of robots in the case of a common chirality.



\begin{figure}[t!]
\begin{center}
\includegraphics[width=10cm]{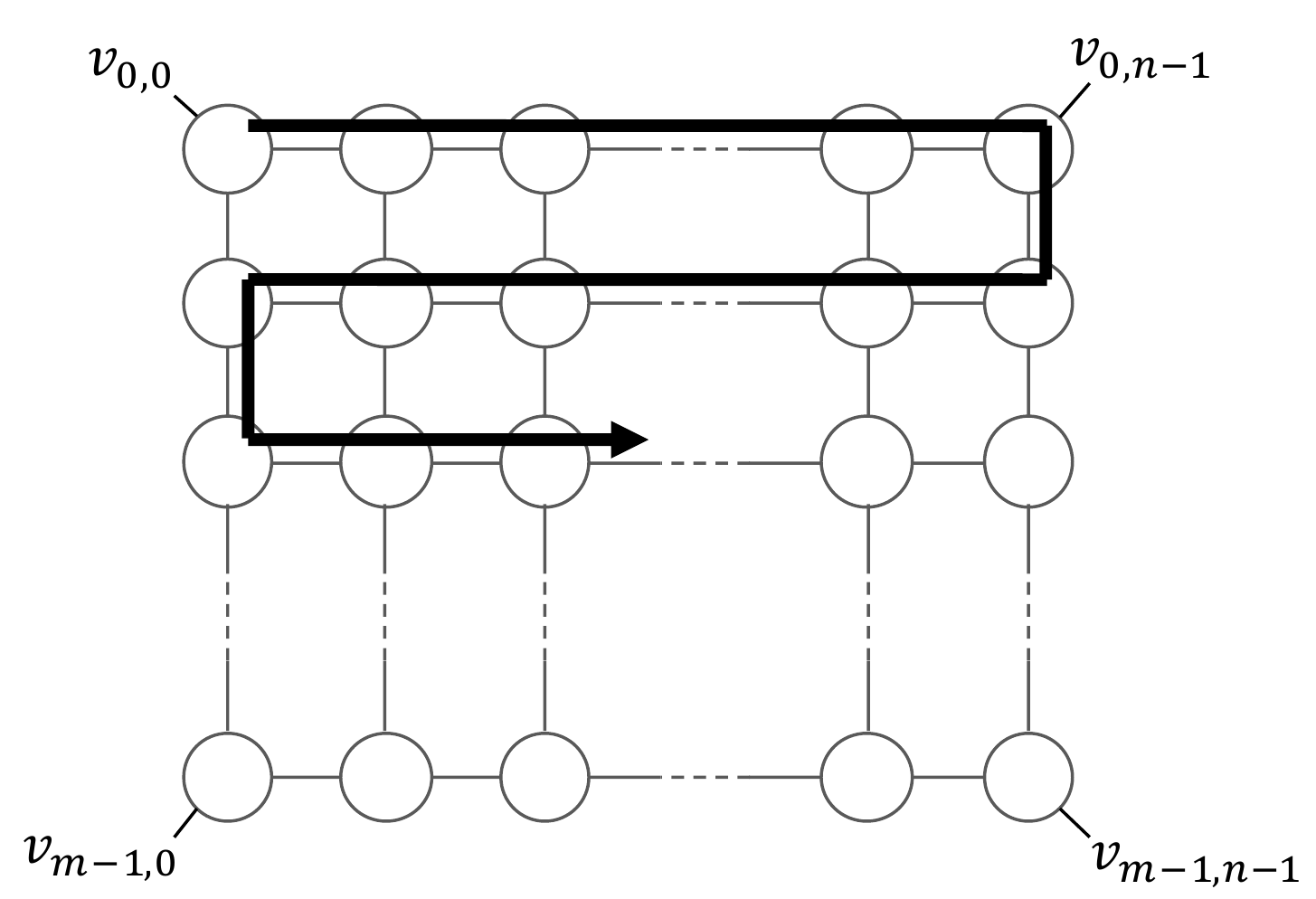}
\end{center}
\caption{Route of grid exploration with our algorithm}
\label{routeFig}
\end{figure}

In the following subsections, we give terminating grid exploration algorithms in various assumptions.
We explain a set of rules and an execution from an initial configuration with figures.
In the explanations, we mention rules that can be applied in each configuration.
We omit explanations why other rules cannot be applied, but readers can easily check it by comparing the configuration and the set of rules.

\subsection{Algorithms for the FSYNC model}
In this subsection, we give terminating grid exploration algorithms for the FSYNC model.

\subsubsection{$\phi=2$, $\ell=2$, a common chirality, and $k=2$}
\label{secF22T2}
We give a terminating exploration algorithm for $m\times n$ grids $(m\geq2, n\geq3)$ in case of $\phi=2$, $\ell=2$, a common chirality, and $k=2$.
A set of colors is $Col=\{\G, \W\}$.
The algorithm is given in Algorithm \ref{algorithmF22T2}.

\begin{algorithm}[t!]
\caption{Fully Synchronous Terminating Exploration for $\phi=2,\,\ell=2,$ $k=2$ with a Common Chirality}
\label{algorithmF22T2}
\begin{algorithmic}
\renewcommand{\algorithmicrequire}{\textbf{Initial configuration}}
\REQUIRE
\STATE $\{(v_{0,0},\{\G\}),(v_{0,1},\{\W\})\}$
\renewcommand{\algorithmicrequire}{\textbf{Rules}}
\REQUIRE
\STATE
  \centering
  \includegraphics[width=0.95\textwidth]{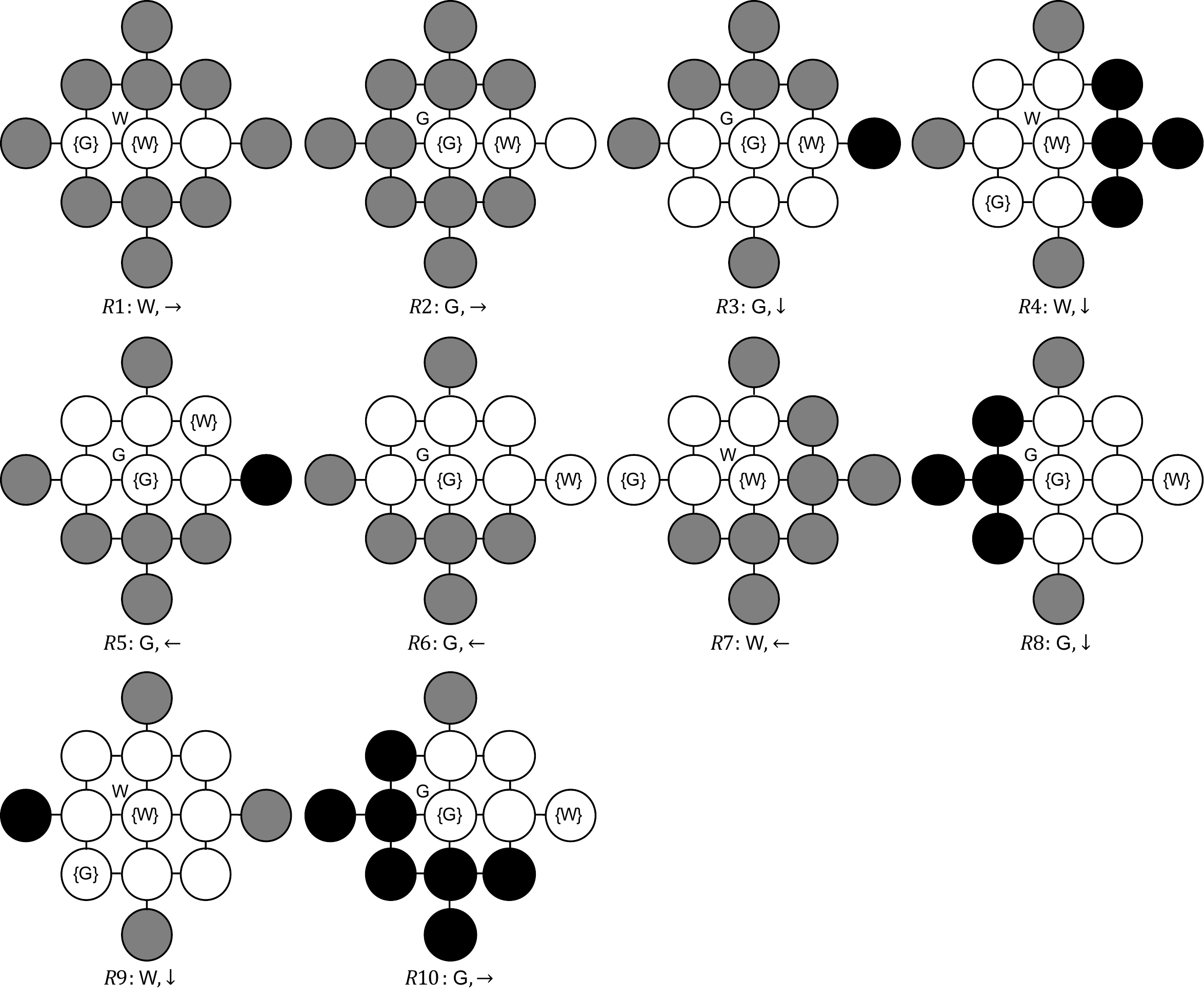}
\end{algorithmic}
\end{algorithm}

\paragraph{Proceeding east.}
From the initial configuration, robots with color $\G$ and $\W$ can execute rules $R1$ and $R2$, respectively.
Hence, they proceed east while keeping the form.

\paragraph{Turning west.}
The process of turning west is shown in Fig.\,\ref{turnWestF22T2}.
\begin{figure}[tbp]
\begin{center}
 \includegraphics[scale=0.8]{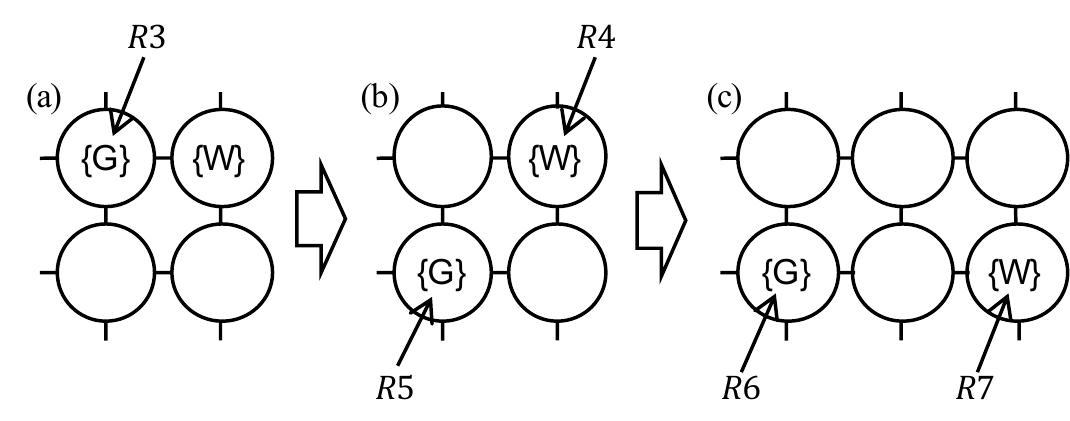}
\end{center}
\caption{Turning west in an execution of Algorithm\,\ref{algorithmF22T2}}
\label{turnWestF22T2}
\end{figure}
After robots proceed east, they reach the east end of the grid (Fig.\,\ref{turnWestF22T2}(a)). 
From this configuration, the robot with color $\G$ moves south by rule $R3$, and hence the configuration becomes one in Fig.\,\ref{turnWestF22T2}(b).
From this configuration, the robot with color $\W$ moves south by rule $R4$.
At the same time, the robot with color $\G$ moves west by rule $R5$.
Hence, the configuration becomes one in Fig.\,\ref{turnWestF22T2}(c).

\paragraph{Proceeding west.}
From the configuration in Fig.\,\ref{turnWestF22T2}(c), the robot with color $\G$ and the robot with color $\W$ can execute rules $R6$ and $R7$, respectively.
Hence, they proceed west while keeping the form.

\paragraph{Turning east.}
The process of turning east is shown in Fig.\,\ref{turnEastF22T2}.
\begin{figure}[tbp]
\begin{center}
 \includegraphics[scale=0.8]{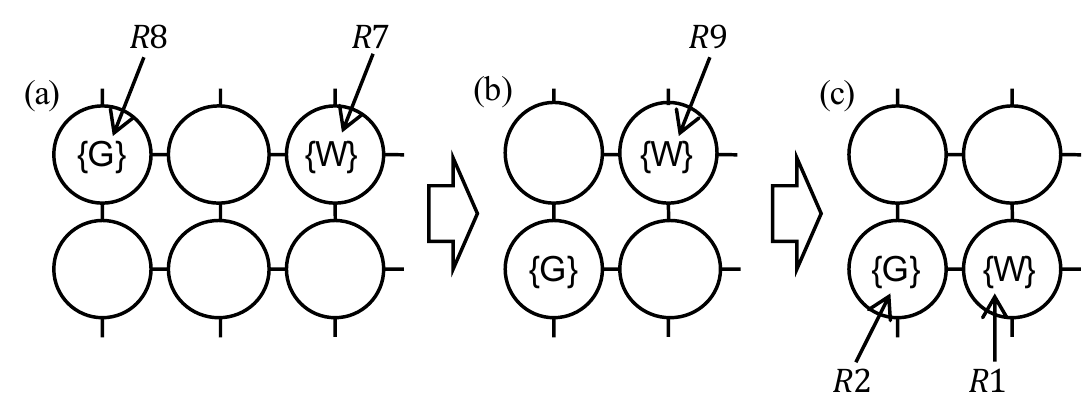}
\end{center}
\caption{Turning east in an execution of Algorithm\,\ref{algorithmF22T2}}
\label{turnEastF22T2}
\end{figure}
After robots proceed west, they reach the west end of the grid (Fig.\,\ref{turnEastF22T2}(a)). 
From this configuration, the robot with color $\G$ moves south by rule $R8$.
At the same time, the robot with color $\W$ moves by rule $R7$.
Hence, the configuration becomes one in Fig.\,\ref{turnEastF22T2}(b).
From this configuration, the robot with color $\W$ moves south by rule $R9$, and hence the configuration becomes one in Fig.\,\ref{turnEastF22T2}(c).
From this configuration, two robots can proceed east again.

\paragraph{End of exploration.}
After robots visit all nodes and reach a south corner of the grid, the configuration becomes terminal.
In case that $m$ is odd, two robots visit the south end nodes while proceeding east, and hence they reach the southeast corner.
Immediately after node $v_{m-1, n-1}$ is visited, the configuration is $\{(v_{m-1,n-2},\{\G\}),(v_{m-1,n-1},\{\W\})\}$.
At this configuration, no robots are enabled.
In case that $m$ is even, two robots visit the south end nodes while proceeding west, and hence they reach the southwest corner.
Immediately after node $v_{m-1, 0}$ is visited, the configuration is $\{(v_{m-1,0},\{\G\}),(v_{m-1,2},\{\W\})\}$.
From this configuration, robots with colors $\G$ and $\W$ move by rules $R10$ and $R7$, respectively.
Hence, the configuration becomes $\{(v_{m-1,1},\{\G,\W\})\}$.
At this configuration, no robots are enabled.

\subsubsection{$\phi=2$, $\ell=2$, no common chirality, and $k=3$}
\label{secF22F3}
We give a terminating exploration algorithm for $m\times n$ grids $(m\geq2, n\geq3)$ in case of $\phi=2$, $\ell=2$, no common chirality, and $k=3$.
A set of colors is $Col=\{\G, \W\}$.
The algorithm is given in Algorithm \ref{algorithmF22F3}.

\begin{algorithm}[tbp]
\caption{Fully Synchronous Terminating Exploration for $\phi=2,\,\ell=2,$ $k=3$ Without Common Chirality}
\label{algorithmF22F3}
\begin{algorithmic}
\renewcommand{\algorithmicrequire}{\textbf{Initial configuration}}
\REQUIRE
\STATE $\{(v_{0,0},\{\G\}),(v_{0,1},\{\G\}),(v_{1,0},\{\W\})\}$
\renewcommand{\algorithmicrequire}{\textbf{Rules}}
\REQUIRE
\STATE
  \centering
  \includegraphics[width=0.95\textwidth]{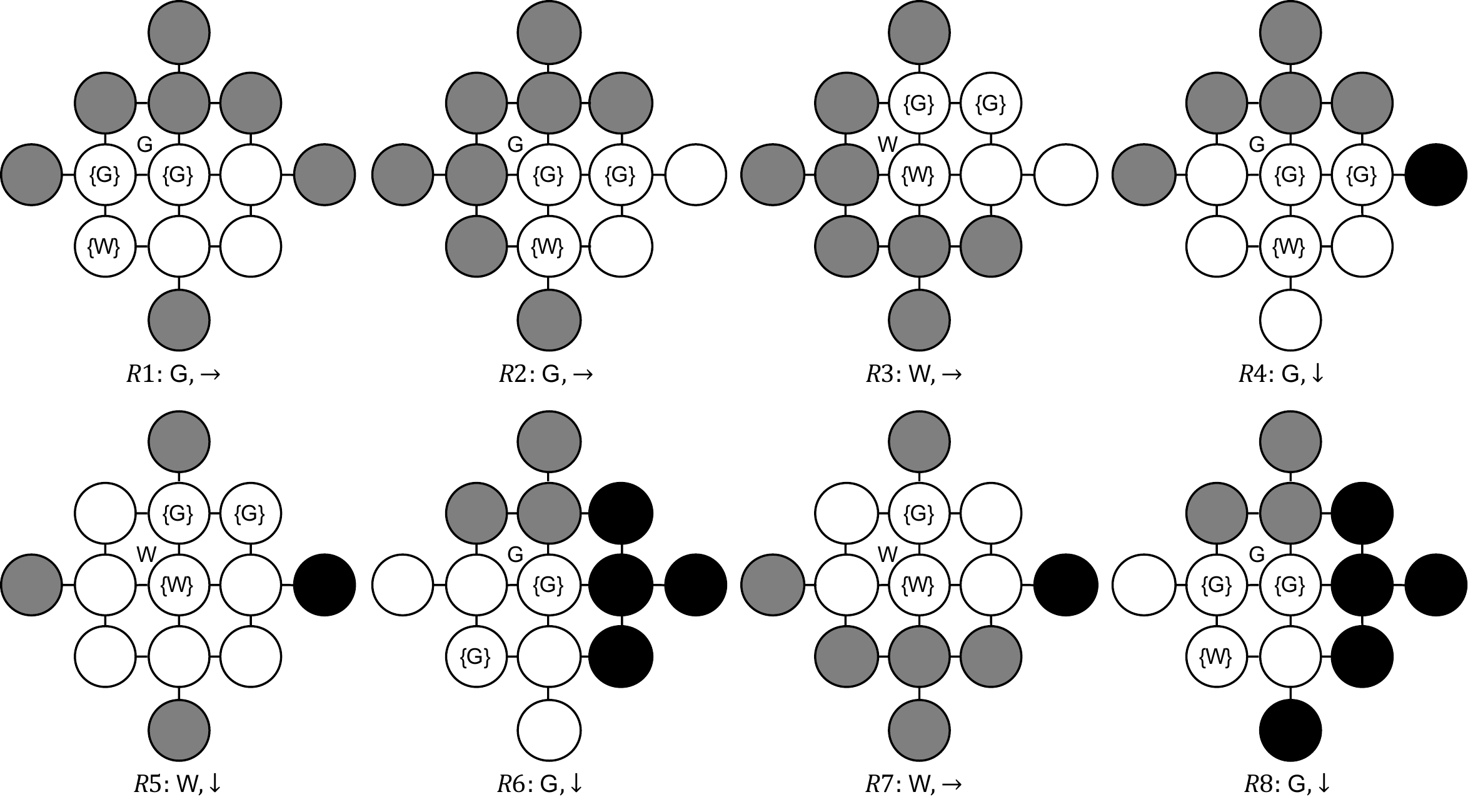}
\end{algorithmic}
\end{algorithm}

\paragraph{Proceeding east.}
At the initial configuration, the robot on $v_{0,1}$ can execute rule $R1$, the robot on $v_{0,0}$ can execute rule $R2$, and the robot on $v_{1,0}$ can execute rule $R3$.
By repeatedly executing those rules, robots proceed east while keeping the form.

\paragraph{Turning west.}
The process of turning west is shown in Fig.\,\ref{turnWestF22F3}.
\begin{figure}[tbp]
\begin{center}
 \includegraphics[scale=0.8]{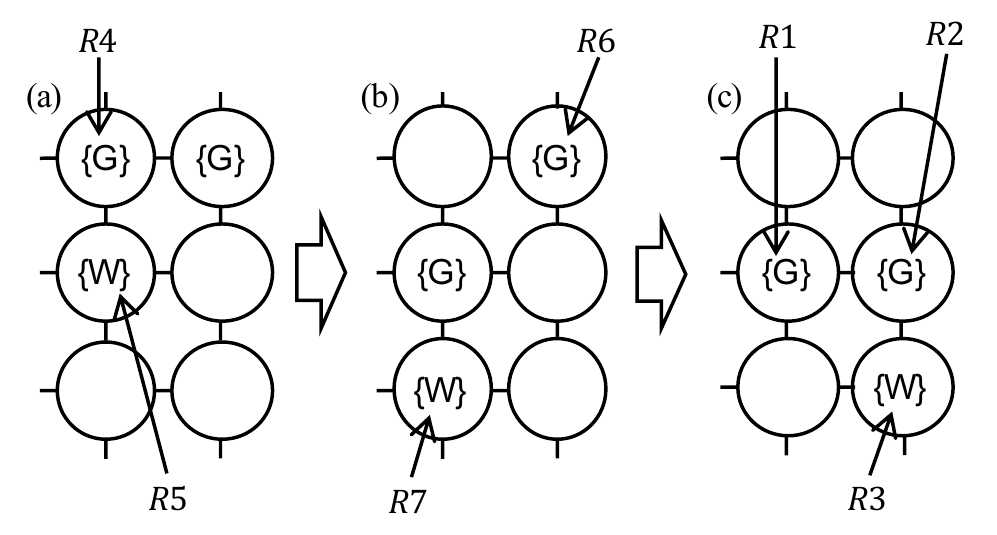}
\end{center}
\caption{Turning west in an execution of Algorithm\,\ref{algorithmF22F3}}
\label{turnWestF22F3}
\end{figure}
After robots proceed east, they reach the east end of the grid (Fig.\,\ref{turnWestF22F3}(a)). 
From this configuration, two robots on west nodes move south by rules $R4$ and $R5$.
Hence, the configuration becomes one in Fig.\,\ref{turnWestF22F3}(b).
From this configuration, the robot with color $\G$ at the east end of the grid moves south by rule $R6$ and the robot with color $\W$ moves east by rule $R7$.
Consequently, the configuration becomes one in Fig.\,\ref{turnWestF22F3}(c).

\paragraph{Proceeding west and turning east.}
The form of robots in Fig.\,\ref{turnWestF22F3}(c) is a mirror image of the one that robots make to proceed east.
Hence, robots proceed west and turn east with the same rules as proceeding east and turning west, respectively.

\paragraph{End of exploration.}
In case that $m$ is odd, robots visit the south end nodes while proceeding west.
Eventually, the configuration becomes $\{(v_{m-2,0},\{\G\}),(v_{m-2,1},\{\G\}),(v_{m-1,1},\{\W\})\}$.
Node $v_{m-1,0}$ has not been visited yet.
From this configuration, the robot on $v_{m-2,0}$ moves to $v_{m-1,0}$ by rule $R8$, and hence the configuration becomes $\{(v_{m-1,0},\{\G\}),(v_{m-2,1},\{\G\}),(v_{m-1,1},\{\W\})\}$.
At this configuration, no robots are enabled.
In case that $m$ is even, robots terminate the algorithm similarly to the odd case.

\subsubsection{$\phi=2$, $\ell=1$, a common chirality, and $k=3$}
\label{secF21T3}
In executions of Algorithm \ref{algorithmF22T2}, robots do not change their colors and robots with different colors do not occupy a single node.
Therefore, by representing the robot of color $\W$ in Algorithm \ref{algorithmF22T2} with two robots of color $\G$, we can construct a terminating exploration algorithm in case of $\phi=2$, $\ell=1$, a common chirality, and $k=3$.

\subsubsection{$\phi=2$, $\ell=1$, no common chirality, and $k=4$}
\label{secF21F4}
In executions of Algorithm \ref{algorithmF22F3}, robots do not change their colors and robots with different colors do not occupy a single node.
Therefore, by representing the robot of color $\W$ in Algorithm \ref{algorithmF22F3} with two robots of color $\G$, we can construct a terminating exploration algorithm in case of $\phi=2$, $\ell=1$, no common chirality, and $k=4$.

\subsubsection{$\phi=1$, $\ell=3$, a common chirality, and $k=2$}
\label{secF13T2}
We give a terminating exploration algorithm for $m\times n$ grids $(m\geq2, n\geq3)$ in case of $\phi=1$, $\ell=3$, a common chirality, and $k=2$.
A set of colors is $Col=\{\G, \W, \B\}$.
The algorithm is given in Algorithm \ref{algorithmF13T2}.

\begin{algorithm}[tbp]
\caption{Fully Synchronous Terminating Exploration for $\phi=1,\,\ell=3,$ $k=2$ with Common Chirality}
\label{algorithmF13T2}
\begin{algorithmic}
\renewcommand{\algorithmicrequire}{\textbf{Initial configuration}}
\REQUIRE
\STATE $\{(v_{0,0},\{\G\}),(v_{0,1},\{\W\})\}$
\renewcommand{\algorithmicrequire}{\textbf{Rules}}
\REQUIRE
\STATE
  \centering
  \includegraphics[width=0.95\textwidth]{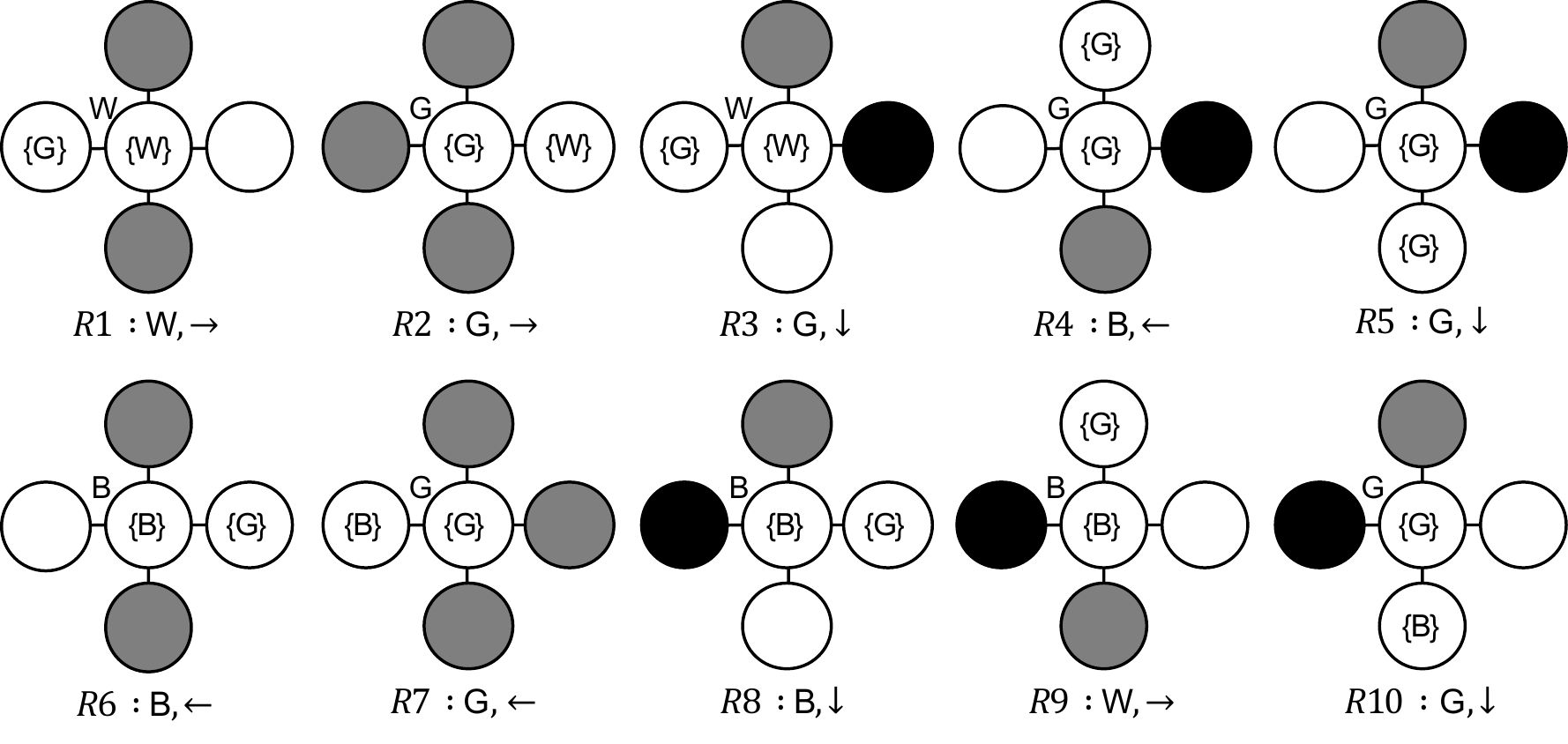}
\end{algorithmic}
\end{algorithm}

\paragraph{Proceeding east.}
From the initial configuration, robots with colors $\W$ and $\G$ can execute rules $R1$ and $R2$, respectively.
Hence, they proceed east while keeping the form.

\paragraph{Turning west.}
The process of turning west is shown in Fig.\,\ref{turnWestF13T2}.
\begin{figure}[tbp]
\begin{center}
 \includegraphics[scale=0.8]{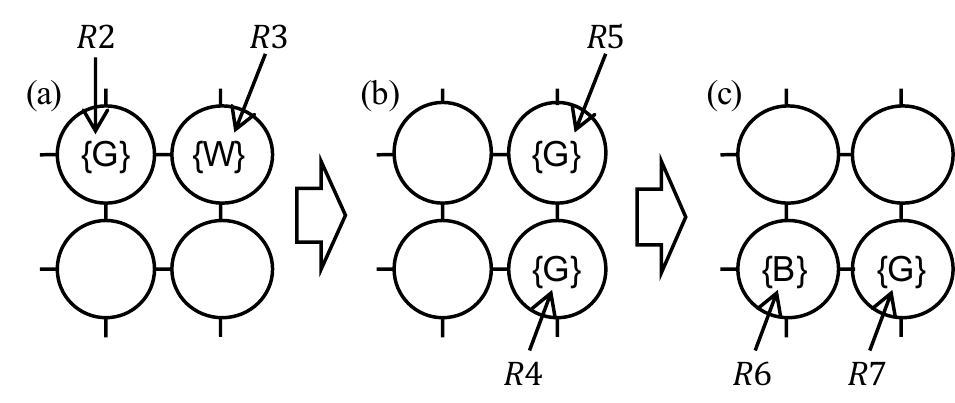}
\end{center}
\caption{Turning west in an execution of Algorithm\,\ref{algorithmF13T2}}
\label{turnWestF13T2}
\end{figure}
After robots proceed east, they reach the east end of the grid (Fig.\,\ref{turnWestF13T2}(a)). 
From this configuration, the robot with color $\W$ moves south by rule $R3$.
At the same time, the robot with color $\G$ moves east by rule $R2$.
Hence, the configuration becomes one in Fig.\,\ref{turnWestF13T2}(b).
From this configuration, the robot on a south node changes its color to $\B$ and moves west by rule $R4$.
At the same time, the robot on a north node moves south by rule $R5$.
Consequently, the configuration becomes one in Fig.\,\ref{turnWestF13T2}(c).

\paragraph{Proceeding west.}
From the configuration in Fig.\,\ref{turnWestF13T2}(c), the robot with color $\B$ and the robot with color $\G$ can execute rules $R6$ and $R7$, respectively.
Hence, they proceed west while keeping the form.

\paragraph{Turning east.}
The process of turning east is shown in Fig.\,\ref{turnEastF13T2}.
\begin{figure}[tbp]
\begin{center}
 \includegraphics[scale=0.8]{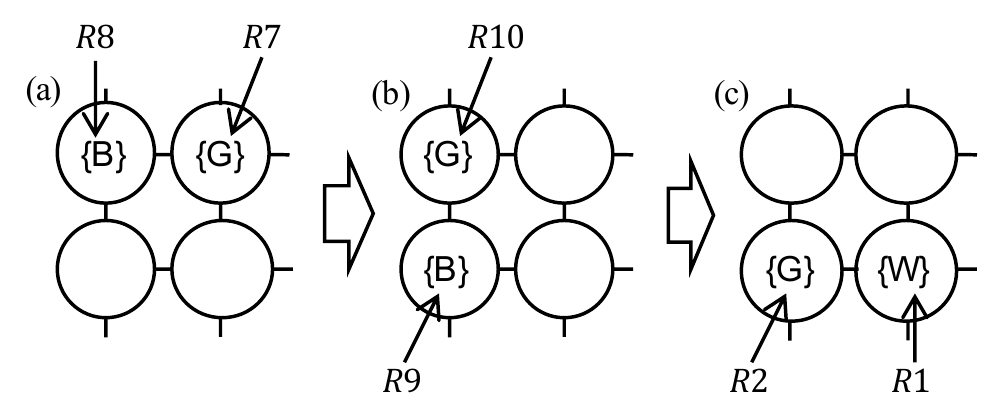}
\end{center}
\caption{Turning east in an execution of Algorithm\,\ref{algorithmF13T2}}
\label{turnEastF13T2}
\end{figure}
After robots proceed west, they reach the west end of the grid (Fig.\,\ref{turnEastF13T2}(a)). 
From this configuration, the robot with color $\B$ moves south by rule $R8$.
At the same time, the robot with color $\G$ moves west by rule $R7$.
Hence, the configuration becomes one in Fig.\,\ref{turnEastF13T2}(b).
From this configuration, the robot with color $\B$ changes its color to $\W$ and moves east by rule $R9$.
At the same time, the robot with color $\G$ moves south by rule $R10$, and hence the configuration becomes one in Fig.\,\ref{turnEastF13T2}(c).
From this configuration, two robots can proceed east again.

\paragraph{End of exploration.}
In case that $m$ is odd, two robots visit the south end nodes while proceeding east, and hence they reach the southeast corner.
Immediately after node $v_{m-1, n-1}$ is visited, the configuration is $\{(v_{m-1,n-2},\{\G\}),(v_{m-1,n-1},\{\W\})\}$.
From this configuration, the robot with color $\G$ moves, and hence the configuration becomes $\{(v_{m-1,n-1},\{\G,\W\})\}$.
At this configuration, no robots are enabled.
In case that $m$ is even, two robots visit the south end nodes while proceeding west, and hence they reach the southwest corner.
Immediately after node $v_{m-1, 0}$ is visited, the configuration is $\{(v_{m-1,0},\{\B\}),(v_{m-1,1},\{\G\})\}$.
From this configuration, the robot with color $\G$ moves by rule $R7$, and hence the configuration becomes $\{(v_{m-1,0},\{\G,\B\})\}$.
At this configuration, no robots are enabled.

\subsubsection{$\phi=1$, $\ell=3$, no common chirality, and $k=4$}
\label{secF13F4}
We give a terminating exploration algorithm for $m\times n$ grids $(m\geq2, n\geq3)$ in case of $\phi=1$, $\ell=3$, no common chirality, and $k=4$.
A set of colors is $Col=\{\G, \W, \B\}$.
The algorithm is given in Algorithm \ref{algorithmF13F4}.

\begin{algorithm}[tbp]
\caption{Fully Synchronous Terminating Exploration for $\phi=1,\,\ell=3,$ $k=4$ Without Common Chirality}
\label{algorithmF13F4}
\begin{algorithmic}
\renewcommand{\algorithmicrequire}{\textbf{Initial configuration}}
\REQUIRE
\STATE $\{(v_{0,0},\{\G\}),(v_{0,1},\{\W\}),(v_{1,0},\{\B\}),(v_{1,1},\{\W\})\}$
\renewcommand{\algorithmicrequire}{\textbf{Rules}}
\REQUIRE
\STATE
  \centering
  \includegraphics[width=0.95\textwidth]{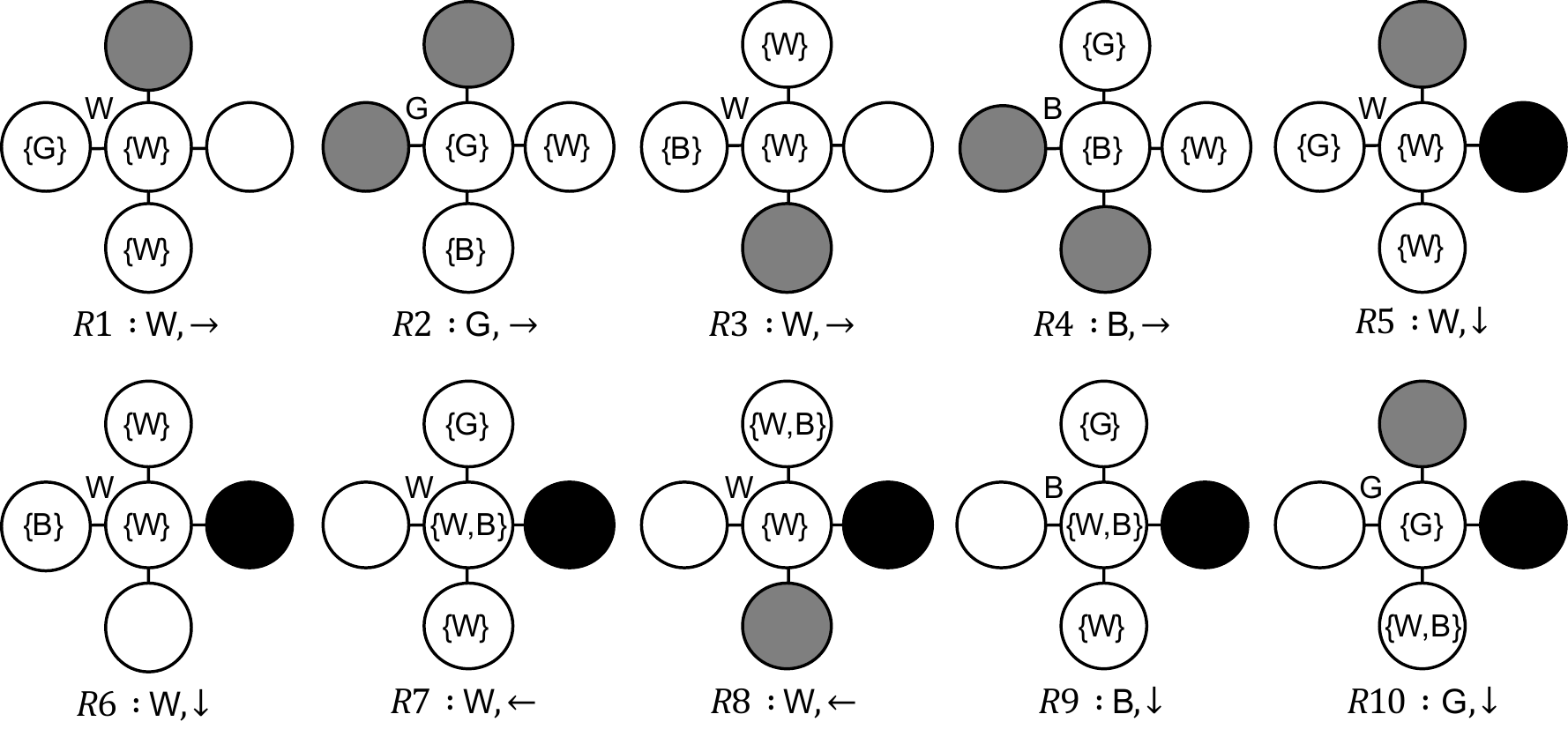}
\end{algorithmic}
\end{algorithm}

\paragraph{Proceeding east.}
At the initial configuration, the robot on $v_{0,1}$ can execute rule $R1$, the robot on $v_{0,0}$ can execute rule $R2$, the robot on $v_{1,1}$ can execute rule $R3$, and the robot on $v_{1,0}$ can execute rule $R4$.
By repeatedly executing those rules, robots proceed east while keeping the form.

\paragraph{Turning west.}
The process of turning west is shown in Fig.\,\ref{turnWestF13F4}.
\begin{figure}[tbp]
\begin{center}
 \includegraphics[scale=0.8]{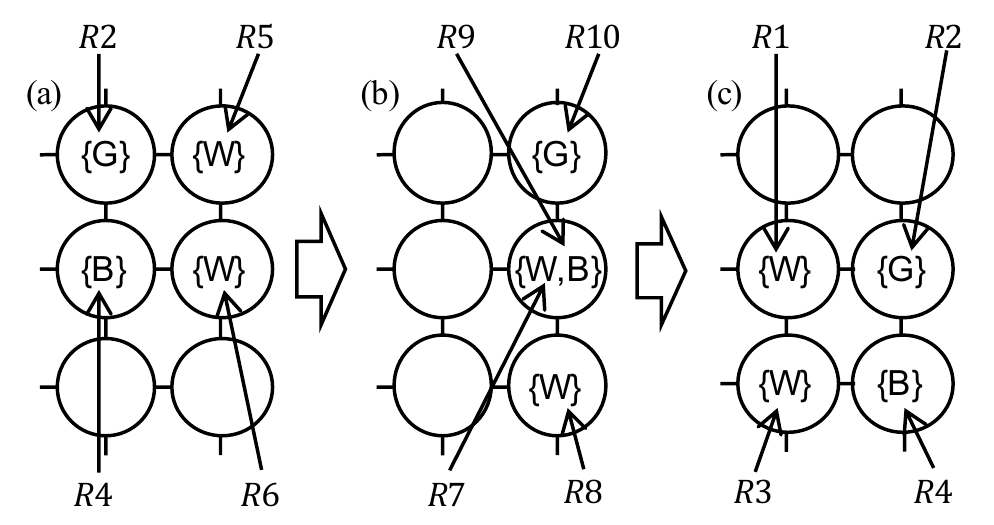}
\end{center}
\caption{Turning west in an execution of Algorithm\,\ref{algorithmF13F4}}
\label{turnWestF13F4}
\end{figure}
After robots proceed east, they reach the east end of the grid (Fig.\,\ref{turnWestF13F4}(a)). 
From this configuration, two robots on east nodes move south by rules $R5$ and $R6$.
At the same time, the other robots move east by rules $R2$ and $R4$.
Hence, the configuration becomes one in Fig.\,\ref{turnWestF13F4}(b).
From this configuration, two robots with color $\W$ move west by rules $R7$ and $R8$.
At the same time, robots with color $\B$ and $\G$ move south by rules $R9$ and $R10$, respectively.
Consequently, the configuration becomes one in Fig.\,\ref{turnWestF13F4}(c).

\paragraph{Proceeding west and turning east.}
The form of robots in Fig.\,\ref{turnWestF13F4}(c) is a mirror image of the one that robots make to proceed east.
Hence, robots proceed west and turn east with the same rules as proceeding east and turning west, respectively.

\paragraph{End of exploration.}
In case that $m$ is odd, robots visit the south end nodes while proceeding west, and hence they reach the southwest corner.
Immediately after node $v_{m-1, 0}$ is visited, the configuration is $\{(v_{m-2,0},\{\W\}),(v_{m-2,1},\{\G\}),(v_{m-1,0},\{\W\}),(v_{m-1,1},\{\B\})\}$.
From this configuration, the robot on $v_{m-2,0}$ moves to $v_{m-1,0}$ by rule $R5$.
At the same time, robots with colors $\G$ and $\B$ move west by rules $R2$ and $R4$, respectively.
Hence, the configuration becomes $\{(v_{m-2,0},\{\G\}),(v_{m-1,0},\{\W,\W,\B\})\}$.
At this configuration, no robots are enabled.
In case that $m$ is even, robots terminate the algorithm similarly to the odd case.

\subsubsection{$\phi=1$, $\ell=2$, a common chirality, and $k=3$}
\label{secF12T3}
We give a terminating exploration algorithm for $m\times n$ grids $(m\geq2, n\geq3)$ in case of $\phi=1$, $\ell=2$, a common chirality, and $k=3$.
A set of colors is $Col=\{\G, \W, \B\}$.
The algorithm is given in Algorithm \ref{algorithmF12T3}.

\begin{algorithm}[tbp]
\caption{Fully Synchronous Terminating Exploration for $\phi=1,\,\ell=2,$ $k=3$ with Common Chirality}
\label{algorithmF12T3}
\begin{algorithmic}
\renewcommand{\algorithmicrequire}{\textbf{Initial configuration}}
\REQUIRE
\STATE $\{(v_{0,0},\{\G\}),(v_{0,1},\{\G\}),(v_{1,0},\{\W\})\}$
\renewcommand{\algorithmicrequire}{\textbf{Rules}}
\REQUIRE
\STATE
  \centering
  \includegraphics[width=0.95\textwidth]{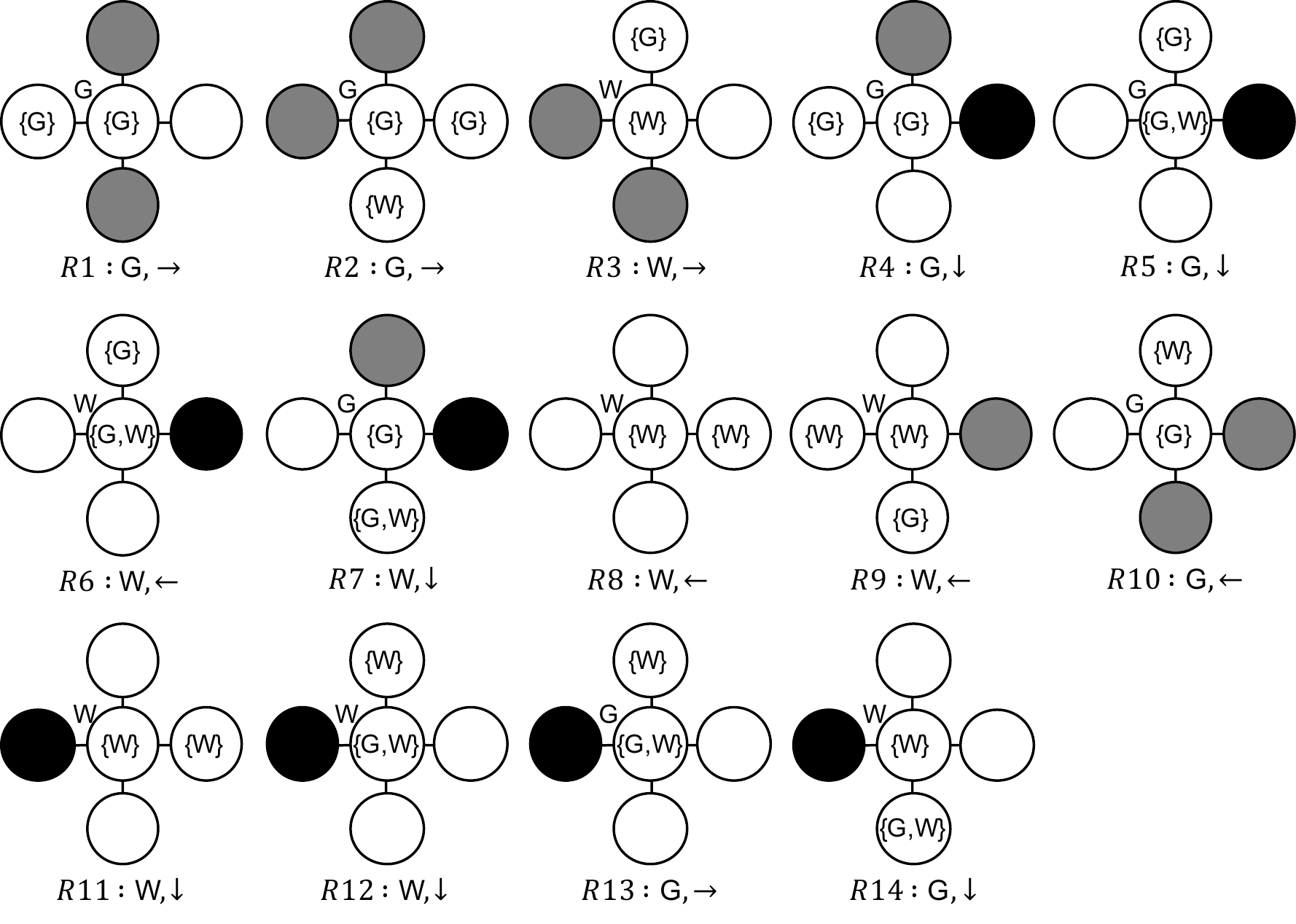}
\end{algorithmic}
\end{algorithm}

\paragraph{Proceeding east.}
At the initial configuration, the robot on $v_{0,1}$ can execute rule $R1$, the robot on $v_{0,0}$ can execute rule $R2$, and the robot on $v_{1,0}$ can execute rule $R3$.
By repeatedly executing those rules, robots proceed east while keeping the form.

\paragraph{Turning west.}
The process of turning west is shown in Fig.\,\ref{turnWestF12T3}.
\begin{figure}[tbp]
\begin{center}
 \includegraphics[scale=0.8]{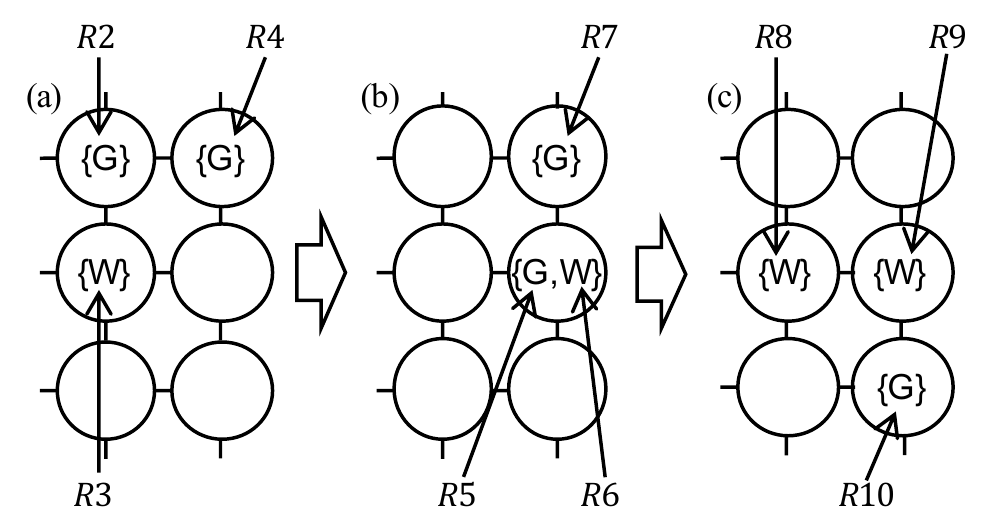}
\end{center}
\caption{Turning west in an execution of Algorithm\,\ref{algorithmF12T3}}
\label{turnWestF12T3}
\end{figure}
After robots proceed east, they reach the east end of the grid (Fig.\,\ref{turnWestF12T3}(a)). 
From this configuration, the robot at the east end moves south by rule $R4$.
At the same time, the other robots move east by rules $R2$ and $R3$.
Hence, the configuration becomes one in Fig.\,\ref{turnWestF12T3}(b).
From this configuration, the robot with color $\G$ on a south node moves south by rule $R5$.
At the same time, the robot with color $\W$ moves west by rule $R6$, and the robot on a north node changes its color to $\W$ and moves south.
Consequently, the configuration becomes one in Fig.\,\ref{turnWestF12T3}(c).

\paragraph{Proceeding west.}
At the configuration in Fig.\,\ref{turnWestF12T3}(c), the robot on a west node can execute rule $R8$, the robot with color $\W$ on a east node can execute rule $R9$, and the robot with color $\G$ can execute rule $R10$.
Hence, they proceed west while keeping the form.

\paragraph{Turning east.}
The process of turning east is shown in Fig.\,\ref{turnEastF12T3}.
\begin{figure}[tbp]
\begin{center}
 \includegraphics[scale=0.8]{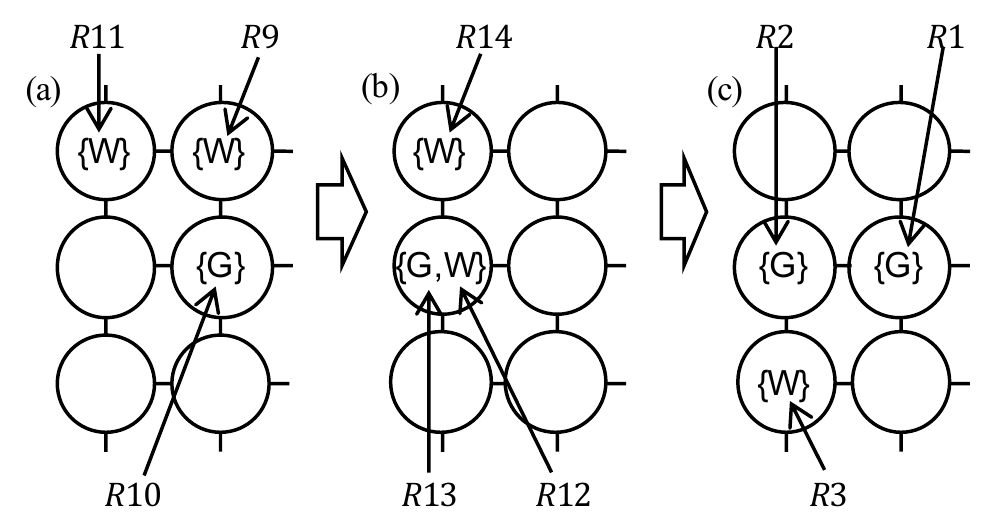}
\end{center}
\caption{Turning east in an execution of Algorithm\,\ref{algorithmF12T3}}
\label{turnEastF12T3}
\end{figure}
After robots proceed west, they reach the west end of the grid (Fig.\,\ref{turnEastF12T3}(a)). 
From this configuration, the robot on a west node moves south by rule $R11$.
At the same time, the other robots move west by rule $R9$ and $R10$.
Hence, the configuration becomes one in Fig.\,\ref{turnEastF12T3}(b).
From this configuration, the robot with color $\W$ on a south node moves south by rule $R12$.
At the same time, the robot with color $\G$ moves east rule $R13$, and the robot on a north node changes its color to $\G$ and moves south by rule $R14$.
Hence, the configuration becomes one in Fig.\,\ref{turnEastF12T3}(c).
From this configuration, three robots can proceed east again.

\paragraph{End of exploration.}
In case that $m$ is odd, robots visit the south end nodes while proceeding west.
Eventually, the configuration becomes $\{(v_{m-2,0},\{\W\}),(v_{m-2,1},\{\W\}),(v_{m-1,1},\{\G\})\}$.
Node $v_{m-1,0}$ has not been visited yet.
From this configuration, the robot on $v_{m-2,0}$ moves to $v_{m-1,0}$ by rule $R11$.
At the same time, the other robots move west by rules $R9$ and $R10$, and hence the configuration becomes $\{(v_{m-2,0},\{\W\}),(v_{m-1,0},\{\G,\W\})\}$.
From this configuration, the robot on $v_{m-2,0}$ moves to $v_{m-1,0}$ by rule $R14$, and hence the configuration becomes $\{(v_{m-1,0},\{\G,\G,\W\})\}$.
At this configuration, no robots are enabled.
In case that $m$ is even, robots visit the south end nodes while proceeding east.
Eventually, the configuration becomes $\{(v_{m-2,n-2},\{\G\}),(v_{m-2,n-1},\{\G\}),(v_{m-1,n-2},\{\W\})\}$.
Node $v_{m-1,n-1}$ has not been visited yet.
From this configuration, the robot on $v_{m-2,n-1}$ moves to $v_{m-1,n-1}$ by rule $R4$.
At the same time, the other robots move east by rules $R2$ and $R3$, and hence the configuration becomes $\{(v_{m-2,n-1},\{\G\}),(v_{m-1,n-1},\{\G,\W\})\}$.
From this configuration, the robot on $v_{m-2,n-1}$ moves to $v_{m-1,n-1}$ by rule $R7$, and hence the configuration becomes $\{(v_{m-1,n-1},\{\G,\W,\W\})\}$.
At this configuration, no robots are enabled.

\subsubsection{$\phi=1$, $\ell=2$, no common chirality, and $k=5$}
\label{secF12F5}
In executions of Algorithm \ref{algorithmF13F4}, robots do not change their colors and robots with colors $\G$ and $\B$ do not occupy a single node.
Therefore, by representing the robot of color $\B$ in Algorithm \ref{algorithmF13F4} with two robots of color $\G$, we can construct a terminating exploration algorithm in case of $\phi=1$, $\ell=2$, no common chirality, and $k=5$.


\subsection{Algorithms for the ASYNC model}
In this subsection, we give terminating exploration algorithms for the ASYNC model.
Clearly robots can achieve terminating exploration with those algorithms also in the SSYNC and FSYNC models.

\subsubsection{$\phi=2$, $\ell=3$, a common chirality, and $k=2$}
\label{secA23T2}
We give a terminating exploration algorithm for $m\times n$ grids $(m\geq2, n\geq3)$ in case of $\phi=2$, $\ell=3$, a common chirality, and $k=2$.
A set of colors is $Col=\{\G, \W, \B\}$.
The algorithm is given in Algorithm \ref{algorithmA23T2}.

\begin{algorithm}[tbp]
\caption{Asynchronous Terminating Exploration for $\phi=2,\,\ell=3,$ $k=2$ with Common Chirality}
\label{algorithmA23T2}
\begin{algorithmic}
\renewcommand{\algorithmicrequire}{\textbf{Initial configuration}}
\REQUIRE
\STATE $\{(v_{0,0},\{\G\}),(v_{0,1},\{\W\})\}$
\renewcommand{\algorithmicrequire}{\textbf{Rules}}
\REQUIRE
\STATE
  \centering
  \includegraphics[width=0.95\textwidth]{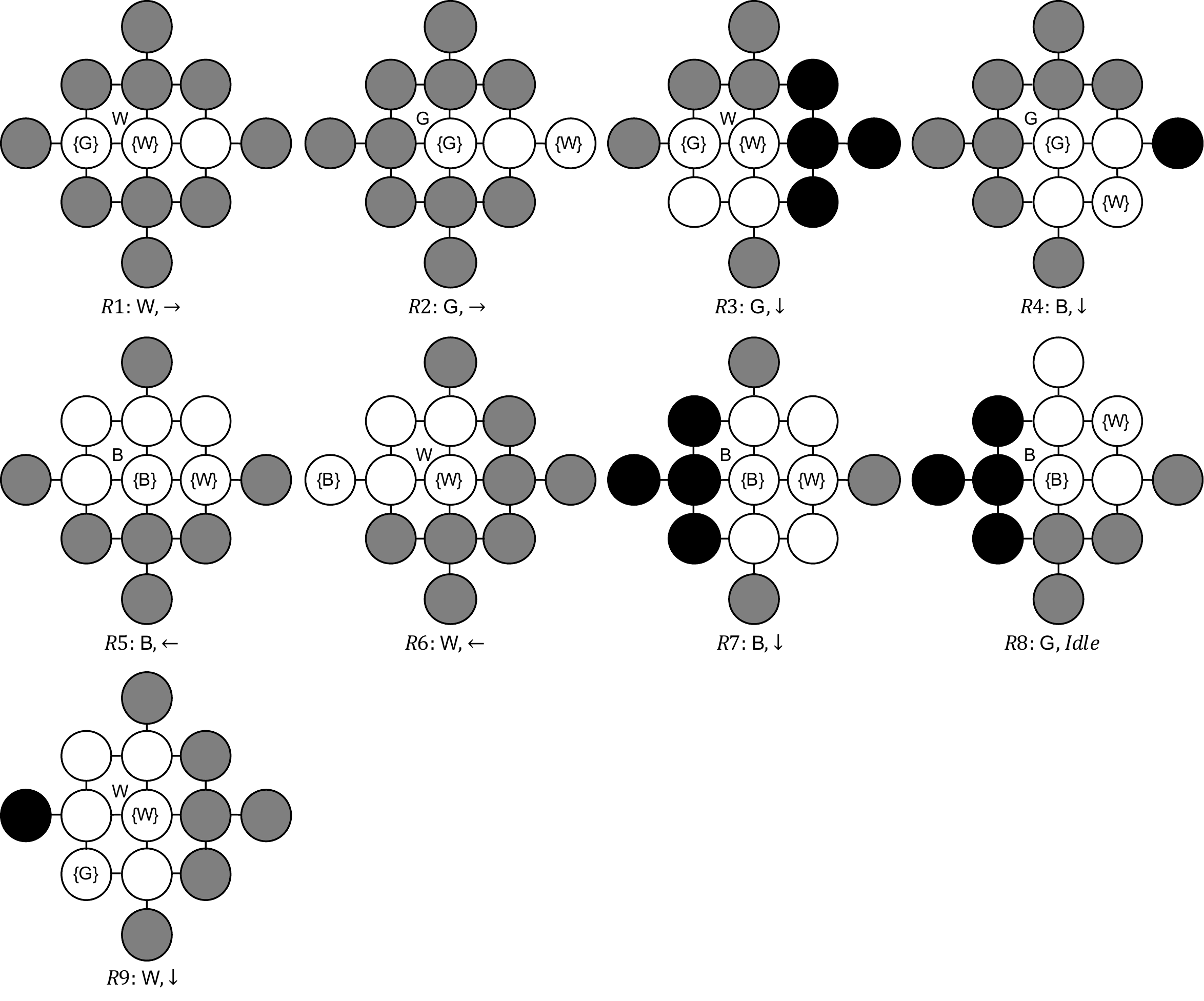}
\end{algorithmic}
\end{algorithm}

\paragraph{Proceeding east.}
From the initial configuration, the robot with color $\W$ moves east by rule $R1$, and hence the configuration becomes $\{(v_{0,0},\{\G\}),(v_{0,2},\{\W\})\}$.
From this configuration, the robot with color $\G$ moves east by rule $R2$, and hence the configuration becomes $\{(v_{0,1},\{\G\}),(v_{0,2},\{\W\})\}$.
After that, robots proceed east while keeping the form by repeatedly executing those rules. 

\paragraph{Turning west.}
The process of turning west is shown in Fig.\,\ref{turnWestA23T2}.
\begin{figure}[tbp]
\begin{center}
 \includegraphics[scale=0.8]{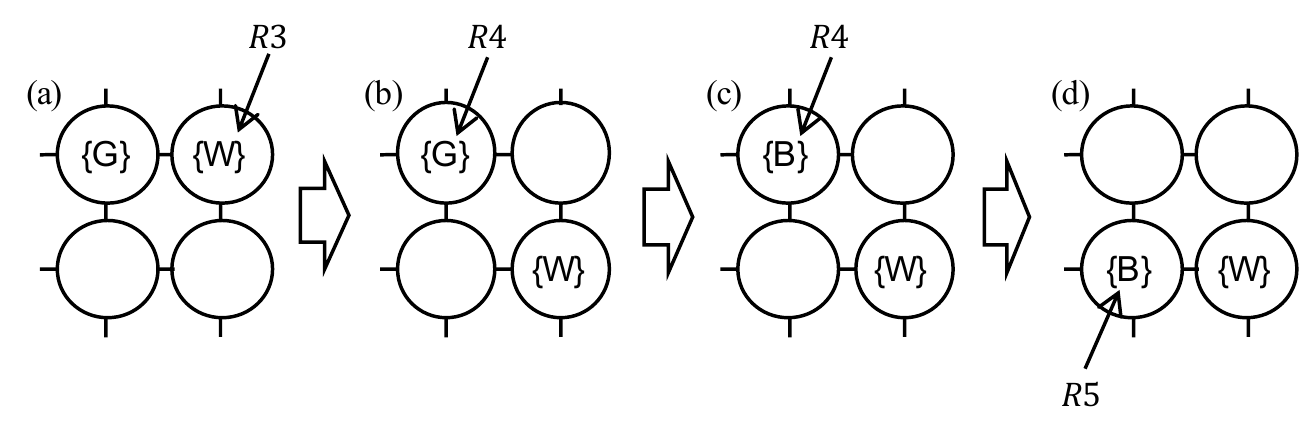}
\end{center}
\caption{Turning west in an execution of Algorithm\,\ref{algorithmA23T2}}
\label{turnWestA23T2}
\end{figure}
After robots proceed east, they reach the east end of the grid (Fig.\,\ref{turnWestA23T2}(a)). 
From this configuration, the robot with color $\W$ moves south by rule $R3$, and hence the configuration becomes one in Fig.\,\ref{turnWestA23T2}(b).
From this configuration, the robot with color $\G$ changes its color to $\B$ and moves south by rule $R4$.
In the ASYNC model, after the robot with color $\G$ changes its color, the other robot may observe the intermediate configuration (Fig.\,\ref{turnWestA23T2}(c)).
However, there are no rules that the other robot can execute in the intermediate configuration.
Consequently, the configuration becomes one in Fig.\,\ref{turnWestA23T2}(d).

\paragraph{Proceeding west.}
From the configuration in Fig.\,\ref{turnWestA23T2}(d), the robot with color $\B$ moves west by rule $R5$.
Next, the robot with color $\W$ moves west by rule $R6$.
After that, robots proceed west while keeping the form by repeatedly executing those rules. 

\paragraph{Turning east.}
The process of turning east is shown in Fig.\,\ref{turnEastA23T2}.
\begin{figure}[tbp]
\begin{center}
 \includegraphics[scale=0.8]{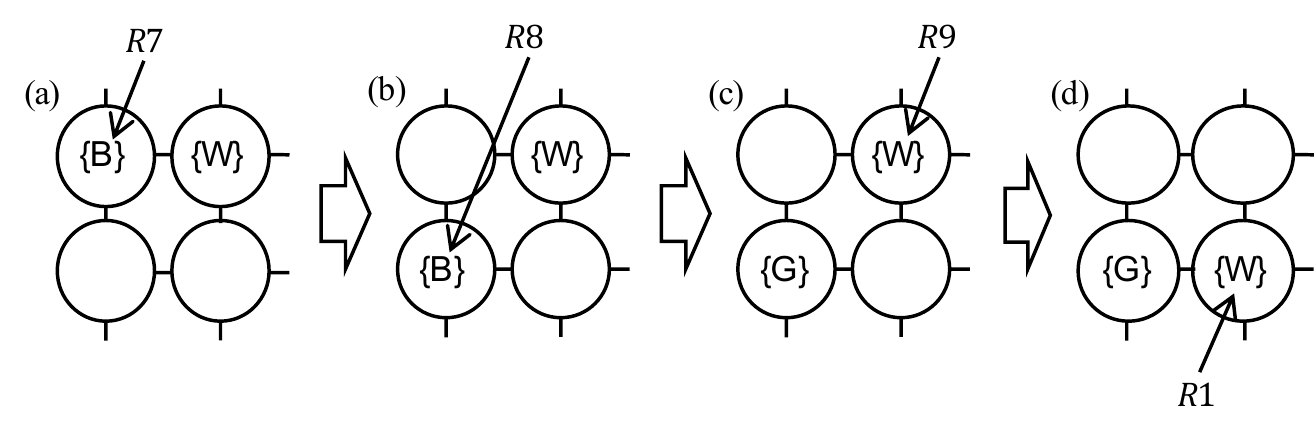}
\end{center}
\caption{Turning east in an execution of Algorithm\,\ref{algorithmA23T2}}
\label{turnEastA23T2}
\end{figure}
After robots proceed west, they reach the west end of the grid (Fig.\,\ref{turnEastA23T2}(a)). 
From this configuration, the robot with color $\B$ moves south by rule $R7$, and hence the configuration becomes one in Fig.\,\ref{turnEastA23T2}(b).
From this configuration, the robot with color $\B$ changes its color to $\G$ by rule $R8$, and hence the configuration becomes one in Fig.\,\ref{turnEastA23T2}(c).
From this configuration, the robot with color $\W$ moves south by rule $R9$, and hence the configuration becomes one in Fig.\,\ref{turnEastA23T2}(d).
From this configuration, two robots can proceed east again.

\paragraph{End of exploration.}
In case that $m$ is odd, two robots visit the south end nodes while proceeding east, and hence they reach the southeast corner.
Immediately after node $v_{m-1, n-1}$ is visited, the configuration is $\{(v_{m-1,n-2},\{\G\}),(v_{m-1,n-1},\{\W\})\}$.
At this configuration, no robots are enabled.
In case that $m$ is even, two robots visit the south end nodes while proceeding west, and hence they reach the southwest corner.
Immediately after node $v_{m-1, 0}$ is visited, the configuration is $\{(v_{m-1,0},\{\B\}),(v_{m-1,1},\{\W\})\}$.
At this configuration, no robots are enabled.

\subsubsection{$\phi=2$, $\ell=3$, no common chirality, and $k=3$}
\label{secA23F3}
We give a terminating exploration algorithm for $m\times n$ grids $(m\geq2, n\geq3)$ in case of $\phi=2$, $\ell=3$, a common chirality, and $k=2$.
A set of colors is $Col=\{\G, \W, \B\}$.
The algorithm is given in Algorithm \ref{algorithmA23F3}.

\begin{algorithm}[tbp]
\caption{Asynchronous Terminating Exploration for $\phi=2,\,\ell=3,$ $k=3$ Without Common Chirality}
\label{algorithmA23F3}
\begin{algorithmic}
\renewcommand{\algorithmicrequire}{\textbf{Initial configuration}}
\REQUIRE
\STATE $\{(v_{0,0},\{\G\}),(v_{0,1},\{\W\}),(v_{1,0},\{\B\})\}$
\renewcommand{\algorithmicrequire}{\textbf{Rules}}
\REQUIRE
\STATE
  \centering
  \includegraphics[width=0.95\textwidth]{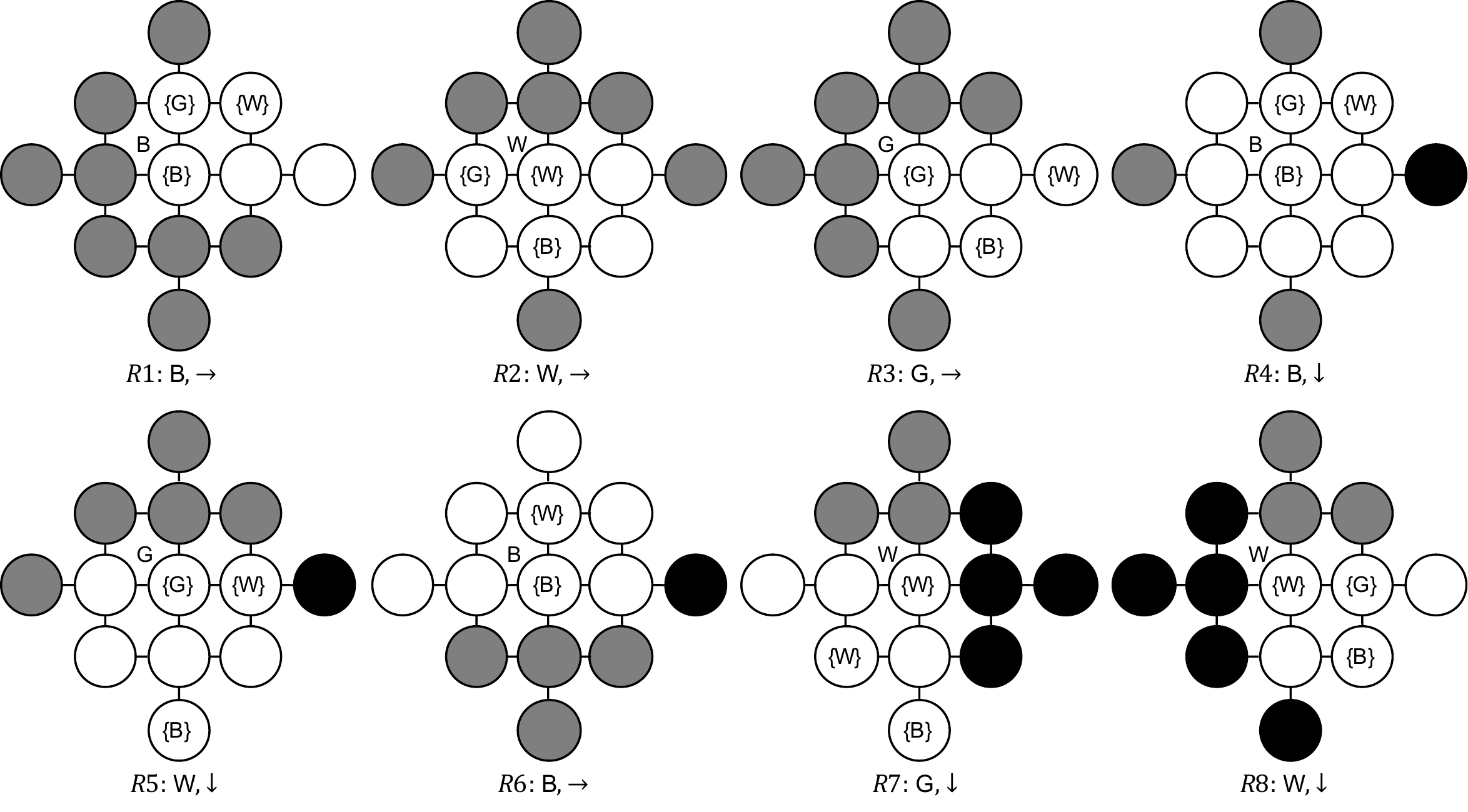}
\end{algorithmic}
\end{algorithm}

\paragraph{Proceeding east.}
From the initial configuration, the robot with color $\B$ moves by rule $R1$, and hence the configuration becomes $\{(v_{0,0},\{\G\}),(v_{0,1},\{\W\}),(v_{1,1},\{\B\})\}$.
From this configuration, the robot with color $\W$ by rule $R2$, and hence the configuration becomes $\{(v_{0,0},\{\G\}),(v_{0,2},\{\W\}),(v_{1,1},\{\B\})\}$.
From this configuration, the robot with color $\G$ by rule $R3$, and hence the configuration becomes $\{(v_{0,1},\{\G\}),(v_{0,2},\{\W\}),(v_{1,1},\{\B\})\}$.
After that, robots proceed east while keeping the form by repeatedly executing those rules. 

\paragraph{Turning west.}
The process of turning west is shown in Fig.\,\ref{turnWestA23F3}.
\begin{figure}[tbp]
\begin{center}
 \includegraphics[scale=0.8]{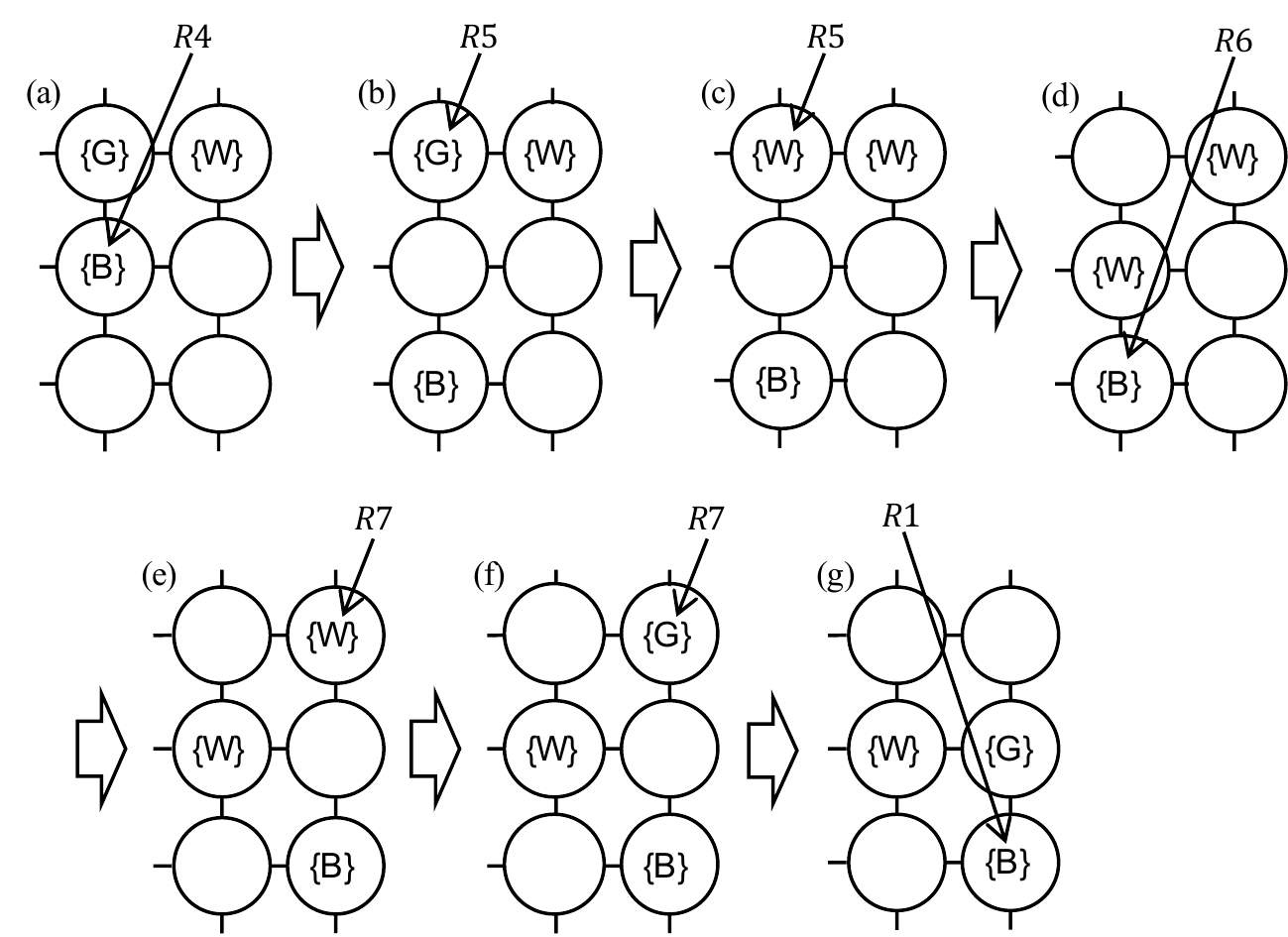}
\end{center}
\caption{Turning west in an execution of Algorithm\,\ref{algorithmA23F3}}
\label{turnWestA23F3}
\end{figure}
After robots proceed east, they reach the east end of the grid (Fig.\,\ref{turnWestA23F3}(a)). 
From this configuration, the robot with color $\B$ moves south by rule $R4$, and hence the configuration becomes one in Fig.\,\ref{turnWestA23F3}(b).
From this configuration, the robot with color $\G$ changes its color to $\W$ and moves south by rule $R5$.
In the ASYNC model, after the robot with color $\G$ changes its color, other robots may observe the intermediate configuration (Fig.\,\ref{turnWestA23F3}(c)).
However, there are no rules that the other robot can execute in the intermediate configuration.
Hence, the configuration becomes one in Fig.\,\ref{turnWestA23F3}(d).
From this configuration, the robot with color $\B$ moves east by rule $R6$, and hence the configuration becomes one in Fig.\,\ref{turnWestA23F3}(e).
From this configuration, the robot with color $\W$ changes its color to $\G$ and moves south by rule $R7$.
In the ASYNC model, after the robot with color $\W$ changes its color, other robots may observe the intermediate configuration (Fig.\,\ref{turnWestA23F3}(f)).
However, there are no rules that the other robot can execute in the intermediate configuration.
Consequently, the configuration becomes one in Fig.\,\ref{turnWestA23F3}(g).

\paragraph{Proceeding west and turning east.}
The form of robots in Fig.\,\ref{turnWestA23F3}(g) is a mirror image of the one that robots make to proceed east.
Hence, robots proceed west and turn east with the same rules as proceeding east and turning west, respectively.

\paragraph{End of exploration.}
In case that $m$ is odd, robots visit the south end nodes while proceeding west.
Eventually, the configuration becomes $\{(v_{m-2,0},\{\W\}),(v_{m-2,1},\{\G\}),(v_{m-1,1},\{\B\})\}$.
Node $v_{m-1,0}$ has not been visited yet.
From this configuration, the robot with color $\W$ moves to $v_{m-1,0}$ by rule $R8$, and hence the configuration becomes $\{(v_{m-2,1},\{\G\}),(v_{m-1,0},\{\W\}),(v_{m-1,1},\{\B\})\}$.
At this configuration, no robots are enabled.
In case that $m$ is even, robots terminate the algorithm similarly to the odd case.

\subsubsection{$\phi=2$, $\ell=2$, a common chirality, and $k=3$}
\label{secA22T3}
We give a terminating exploration algorithm for $m\times n$ grids $(m\geq2, n\geq3)$ in case of $\phi=2$, $\ell=2$, a common chirality, and $k=3$.
A set of colors is $Col=\{\G, \W\}$.
The algorithm is given in Algorithm \ref{algorithmA22T3}.

\begin{algorithm}[tbp]
\caption{Asynchronous Terminating Exploration for $\phi=2,\,\ell=2,$ $k=3$ with Common Chirality}
\label{algorithmA22T3}
\begin{algorithmic}
\renewcommand{\algorithmicrequire}{\textbf{Initial configuration}}
\REQUIRE
\STATE $\{(v_{0,0},\{\G\}),(v_{0,1},\{\W\}),(v_{1,0},\{\G\})\}$
\renewcommand{\algorithmicrequire}{\textbf{Rules}}
\REQUIRE
\STATE
  \centering
  \includegraphics[width=0.95\textwidth]{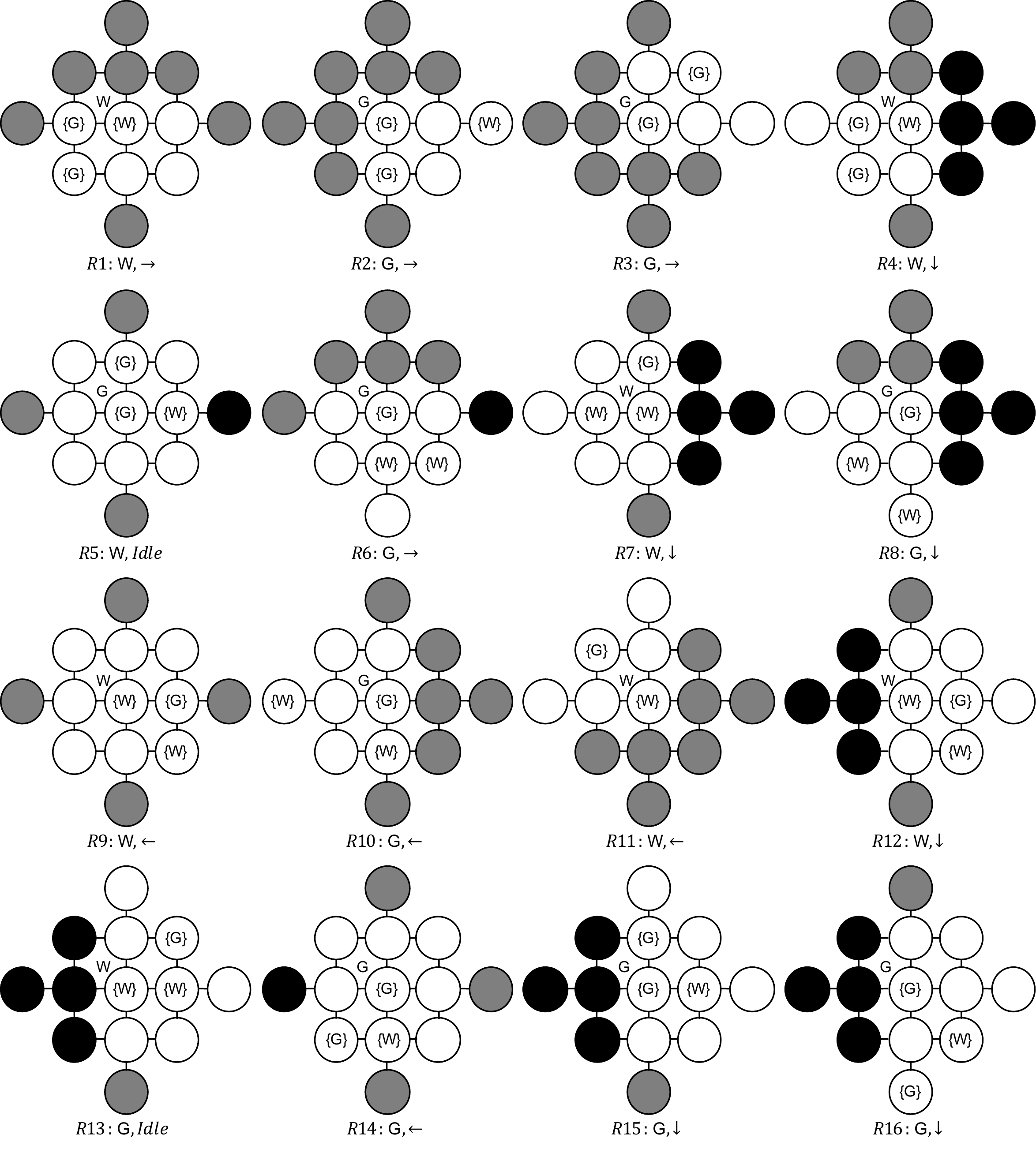}
\end{algorithmic}
\end{algorithm}

\paragraph{Proceeding east.}
From the initial configuration, the robot with color $\W$ moves east by rule $R1$, and hence the configuration becomes $\{(v_{0,0},\{\G\}),(v_{0,2},\{\W\}),(v_{1,0},\{\G\})\}$.
From this configuration, the robot on $v_{0,0}$ moves east by rule $R2$, and hence the configuration becomes $\{(v_{0,1},\{\G\}),(v_{0,2},\{\W\}),(v_{1,0},\{\G\})\}$.
From this configuration, the robot on $v_{1,0}$ moves east by rule $R3$, and hence the configuration becomes $\{(v_{0,1},\{\G\}),(v_{0,2},\{\W\}),(v_{1,1},\{\G\})\}$.
After that, robots proceed east while keeping the form by repeatedly executing those rules. 

\paragraph{Turning west.}
The process of turning west is shown in Fig.\,\ref{turnWestA22T3}.
\begin{figure}[tbp]
\begin{center}
 \includegraphics[scale=0.8]{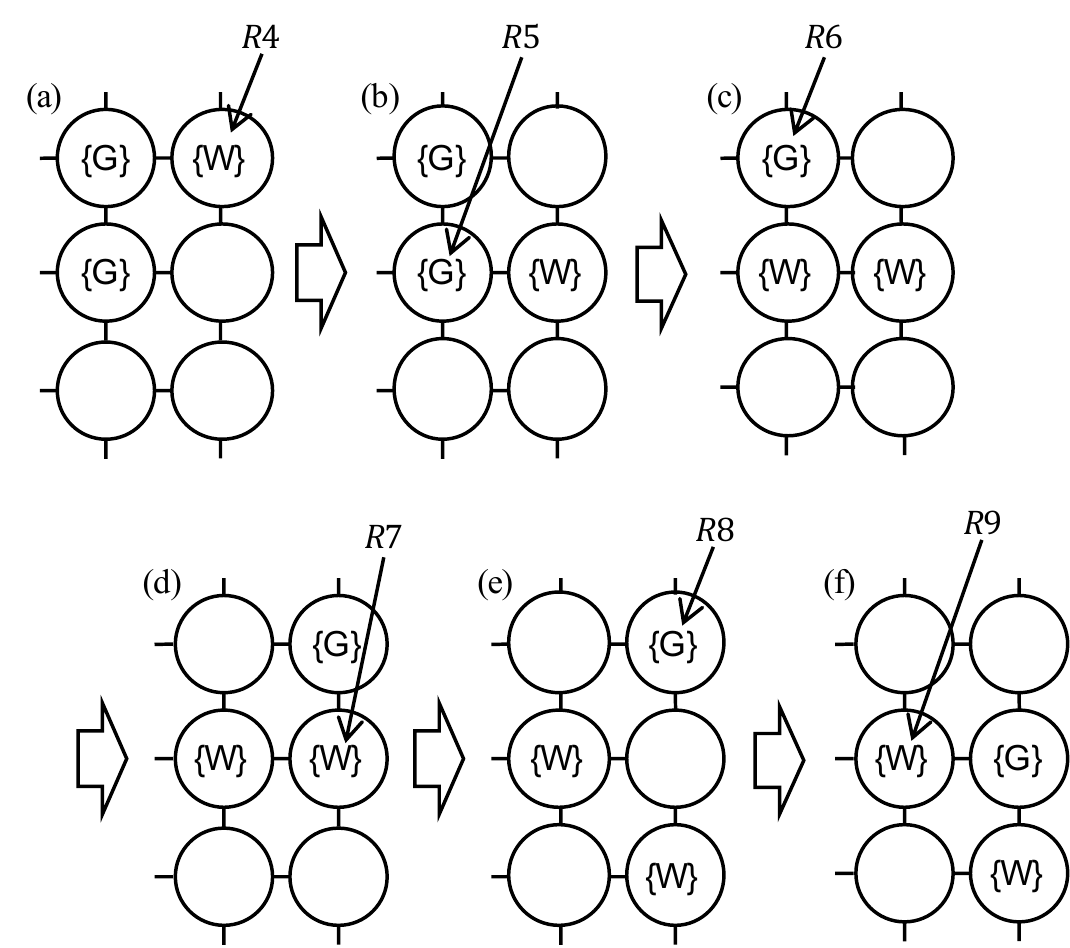}
\end{center}
\caption{Turning west in an execution of Algorithm\,\ref{algorithmA22T3}}
\label{turnWestA22T3}
\end{figure}
After robots proceed east, they reach the east end of the grid (Fig.\,\ref{turnWestA22T3}(a)). 
From this configuration, the robot with color $\W$ moves south by rule $R4$, and hence the configuration becomes one in Fig.\,\ref{turnWestA22T3}(b).
From this configuration, the robot with color $\G$ on a south node changes its color to $\W$ by rule $R5$, and hence the configuration becomes one in Fig.\,\ref{turnWestA22T3}(c).
From this configuration, the robot with color $\G$ moves east by rule $R6$, and hence the configuration becomes one in Fig.\,\ref{turnWestA22T3}(d).
From this configuration, the robot with color $\W$ moves south by rule $R7$, and hence the configuration becomes one in Fig.\,\ref{turnWestA22T3}(e).
From this configuration, the robot with color $\G$ moves south by rule $R8$, and hence the configuration becomes one in Fig.\,\ref{turnWestA22T3}(f).

\paragraph{Proceeding west.}
From the configuration in Fig.\,\ref{turnWestA22T3}(f), the robot with color $\W$ on a west node moves west by rule $R9$.
Next, the robot with color $\G$ moves west by rule $R10$.
Then, the robot with color $\W$ on a east node moves west by rule $R11$.
After that, robots proceed west while keeping the form by repeatedly executing those rules. 

\paragraph{Turning east.}
The process of turning east in an execution of Algorithm \ref{algorithmA22T3} is shown in Fig.\,\ref{turnEastA22T3}.
\begin{figure}[tbp]
\begin{center}
 \includegraphics[scale=0.8]{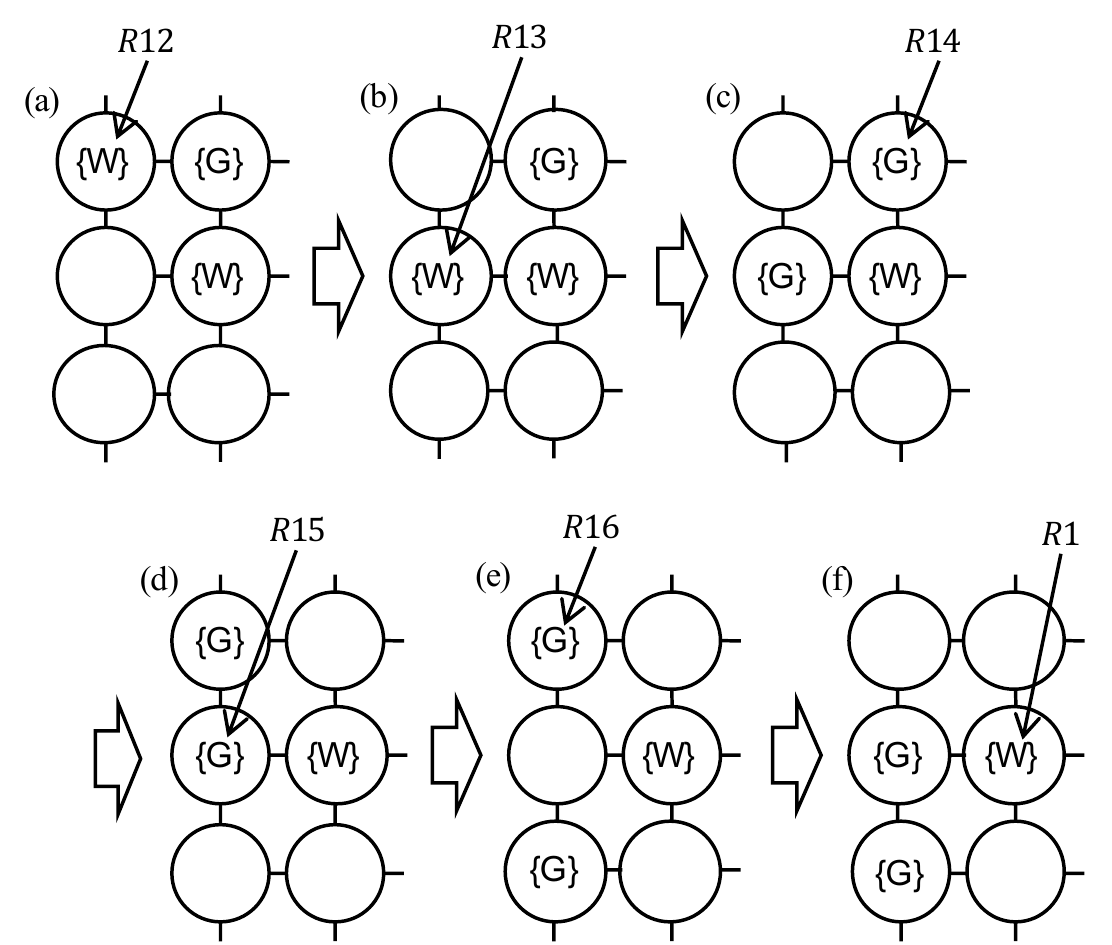}
\end{center}
\caption{Turning east in an execution of Algorithm\,\ref{algorithmA22T3}}
\label{turnEastA22T3}
\end{figure}
After robots proceed west, they reach the west end of the grid (Fig.\,\ref{turnEastA22T3}(a)). 
From this configuration, the robot with color $\W$ on a west node moves south by rule $R12$, and hence the configuration becomes one in Fig.\,\ref{turnEastA22T3}(b).
From this configuration, the robot with color $\W$ on a west node changes its color to $\G$ by rule $R13$, and hence the configuration becomes one in Fig.\,\ref{turnEastA22T3}(c).
From this configuration, the robot with color $\G$ on a north node moves west by rule $R14$, and hence the configuration becomes one in Fig.\,\ref{turnEastA22T3}(d).
From this configuration, the robot with color $\G$ on a south node moves south by rule $R15$, and hence the configuration becomes one in Fig.\,\ref{turnEastA22T3}(e).
From this configuration, the robot with color $\G$ on a north node moves south by rule $R16$, and hence the configuration becomes one in Fig.\,\ref{turnEastA22T3}(f).
From this configuration, two robots can proceed east again.

\paragraph{End of exploration.}
In case that $m$ is odd, robots visit the south end nodes while proceeding west.
Eventually, the configuration becomes $\{(v_{m-2,0},\{\W\}),(v_{m-2,1},\{\G\}),(v_{m-1,1},\{\W\})\}$.
Node $v_{m-1,0}$ has not been visited yet.
From this configuration, the robot on $v_{m-2,0}$ moves to $v_{m-1,0}$ by rule $R12$, and hence the configuration becomes $\{(v_{m-2,1},\{\G\}),(v_{m-1,0},\{\W\}),(v_{m-1,1},\{\W\})\}$.
At this configuration, no robots are enabled.
In case that $m$ is even, robots visit the south end nodes while proceeding east.
Eventually, the configuration becomes $\{(v_{m-2,n-2},\{\G\}),(v_{m-2,n-1},\{\W\}),(v_{m-1,n-2},\{\G\})\}$.
Node $v_{m-1,n-1}$ has not been visited yet.
From this configuration, the robot on $v_{m-2,n-1}$ moves to $v_{m-1,n-1}$ by rule $R4$, and hence the configuration becomes $\{(v_{m-2,n-2},\{\G\}),(v_{m-1,n-2},\{\G\}),(v_{m-1,n-1},\{\W\})\}$.
At this configuration, no robots are enabled.

\subsubsection{$\phi=2$, $\ell=2$, no common chirality, and $k=4$}
\label{secA22F4}
We give a terminating exploration algorithm for $m\times n$ grids $(m\geq2, n\geq3)$ in case of $\phi=2$, $\ell=2$, no common chirality, and $k=4$.
A set of colors is $Col=\{\G, \W\}$.
The algorithm is given in Algorithm \ref{algorithmA22F4}.

\begin{algorithm}[tbp]
\caption{Asynchronous Terminating Exploration for $\phi=2,\,\ell=2,$ $k=4$ Without Common Chirality}
\label{algorithmA22F4}
\begin{algorithmic}
\renewcommand{\algorithmicrequire}{\textbf{Initial configuration}}
\REQUIRE
\STATE $\{(v_{0,0},\{\G\}),(v_{0,1},\{\W\}),(v_{0,2},\{\W\}),(v_{1,0},\{\W\})\}$
\renewcommand{\algorithmicrequire}{\textbf{Rules}}
\REQUIRE
\STATE
  \centering
  \includegraphics[width=0.95\textwidth]{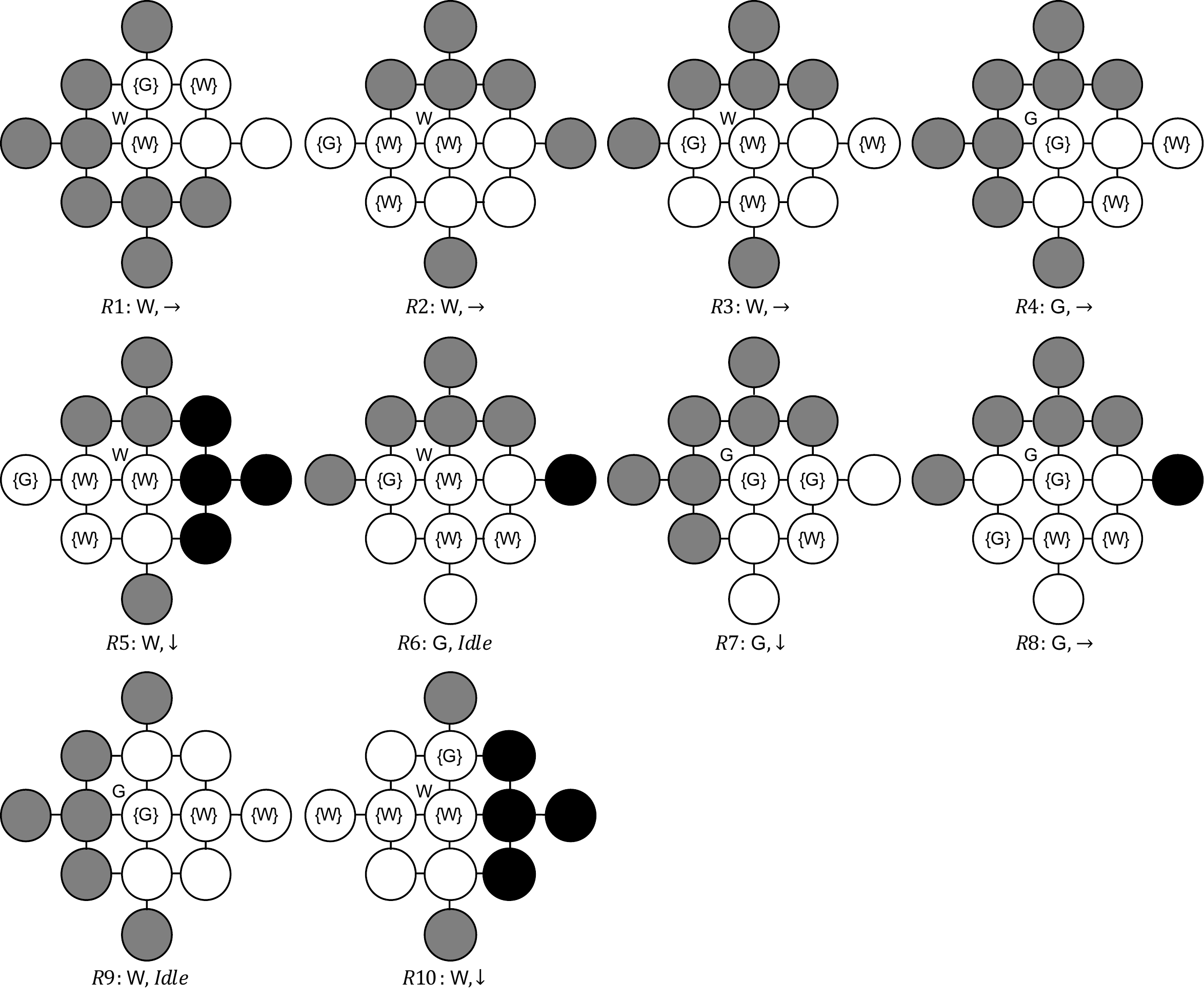}
\end{algorithmic}
\end{algorithm}

\paragraph{Proceeding east.}
The process of proceeding east is shown in Fig.\,\ref{ProceedEastA22F4}.
\begin{figure}[tbp]
\begin{center}
 \includegraphics[scale=0.8]{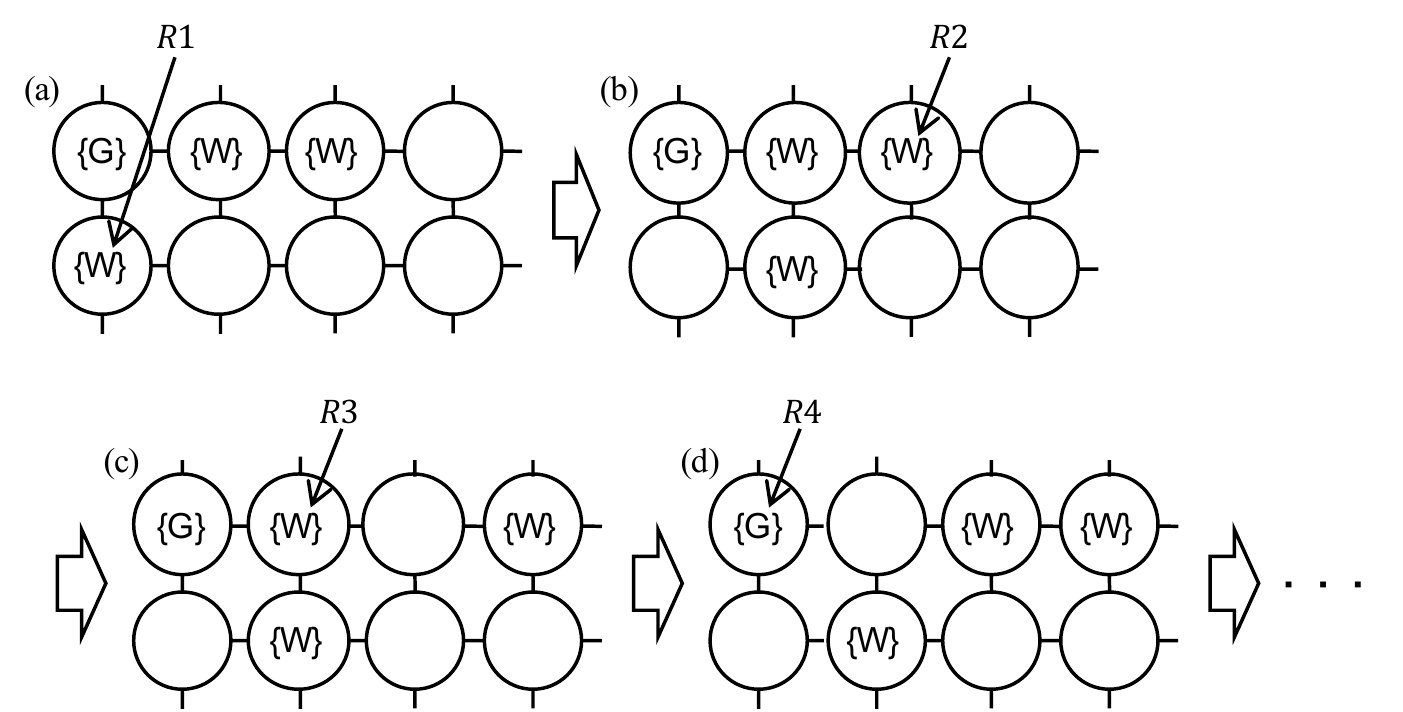}
\end{center}
\caption{Proceeding east in an execution of Algorithm\,\ref{algorithmA22F4}}
\label{ProceedEastA22F4}
\end{figure}
At the initial configuration or at a configuration immediately after turning east, robots make the form in Fig.\,\ref{ProceedEastA22F4}(a).
From this configuration, the robot with color $\W$ on a south node moves east by rule $R1$, and hence the configuration becomes one in Fig.\,\ref{ProceedEastA22F4}(b).
From this configuration, the robot with color $\W$ on an east node moves east by rule $R2$, and hence the configuration becomes one in Fig.\,\ref{ProceedEastA22F4}(c).
From this configuration, the robot with color $\W$ neighboring to the robot with color $\G$ moves east by rule $R3$, and hence the configuration becomes one in Fig.\,\ref{ProceedEastA22F4}(d).
From this configuration, the robot with color $\G$ moves east by rule $R4$.
After that, robots proceed east while keeping the form by repeatedly executing those rules. 

\paragraph{Turning west.}
The process of turning west is shown in Fig.\,\ref{turnWestA22F4}.
\begin{figure}[tbp]
\begin{center}
 \includegraphics[scale=0.8]{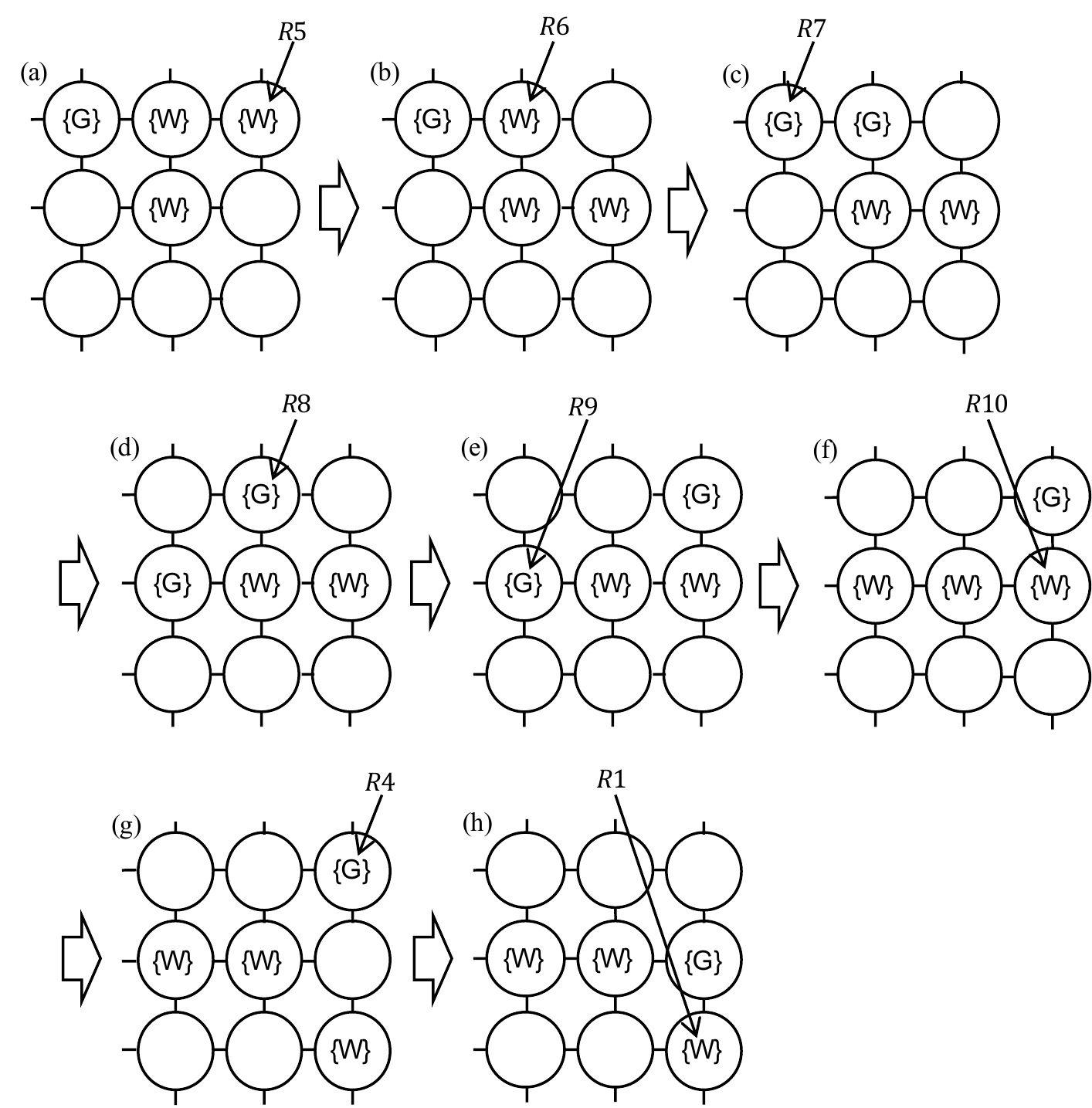}
\end{center}
\caption{Turning west in an execution of Algorithm\,\ref{algorithmA22F4}}
\label{turnWestA22F4}
\end{figure}
After robots proceed east, they reach the east end of the grid, and the configuration becomes one in Fig.\,\ref{turnWestA22F4}(a). 
From this configuration, the robot at the east end moves south by rule $R5$, and hence the configuration becomes one in Fig.\,\ref{turnWestA22F4}(b).
From this configuration, the robot with color $\W$ on a north node changes its color to $\G$ by rule $R6$, and hence the configuration becomes one in Fig.\,\ref{turnWestA22F4}(c).
From this configuration, the robot with color $\G$ on a west node moves south by rule $R7$, and hence the configuration becomes one in Fig.\,\ref{turnWestA22F4}(d).
From this configuration, the robot with color $\G$ on a north node moves east by rule $R8$, and hence the configuration becomes one in Fig.\,\ref{turnWestA22F4}(e).
From this configuration, the robot with color $\G$ on a west node changes its color to $\W$ by rule $R9$, and hence the configuration becomes one in Fig.\,\ref{turnWestA22F4}(f).
From this configuration, the robot with color $\W$ on an east node moves south by rule $R10$, and hence the configuration becomes one in Fig.\,\ref{turnWestA22F4}(g).
From this configuration, the robot with color $\G$ moves south by rule $R4$, and hence the configuration becomes one in Fig.\,\ref{turnWestA22F4}(h).

\paragraph{Proceeding west and turning east.}
The form of robots in Fig.\,\ref{turnWestA22F4}(h) is a mirror image of the one that robots make to proceed east.
Hence, robots proceed west and turn east with the same rules as proceeding east and turning west, respectively.

\paragraph{End of exploration.}
In case that $m$ is odd, robots visit the south end nodes while proceeding west.
Eventually, the configuration becomes $\{(v_{m-2,0},\{\W\}),(v_{m-2,1},\{\W\}),(v_{m-2,2},\{\G\}),(v_{m-1,1},\{\W\})\}$.
Node $v_{m-1,0}$ has not been visited yet.
From this configuration, the robot on $v_{m-2,0}$ moves to $v_{m-1,0}$ by rule $R5$, and hence the configuration becomes $\{(v_{m-2,1},\{\W\}),(v_{m-2,2},\{\G\}),(v_{m-1,0},\{\W\}),(v_{m-1,1},\{\W\})\}$.
At this configuration, no robots are enabled.
In case that $m$ is even, robots terminate the algorithm similarly to the odd case.

\subsubsection{$\phi=1$, $\ell=3$, a common chirality, and $k=3$}
\label{secA13T3}
We give a terminating exploration algorithm for $m\times n$ grids $(m\geq2, n\geq3)$ in case of $\phi=1$, $\ell=3$, a common chirality, and $k=3$.
A set of colors is $Col=\{\G, \W, \B\}$.
The algorithm is given in Algorithm \ref{algorithmA13T3}.

\begin{algorithm}[tbp]
\caption{Asynchronous Terminating Exploration for $\phi=1,\,\ell=3,$ $k=3$ with Common Chirality}
\label{algorithmA13T3}
\begin{algorithmic}
\renewcommand{\algorithmicrequire}{\textbf{Initial configuration}}
\REQUIRE
\STATE $\{(v_{0,0},\{\G\}),(v_{0,1},\{\W\}),(v_{0,2},\{\W\})\}$
\renewcommand{\algorithmicrequire}{\textbf{Rules}}
\REQUIRE
\STATE
  \centering
  \includegraphics[width=0.95\textwidth]{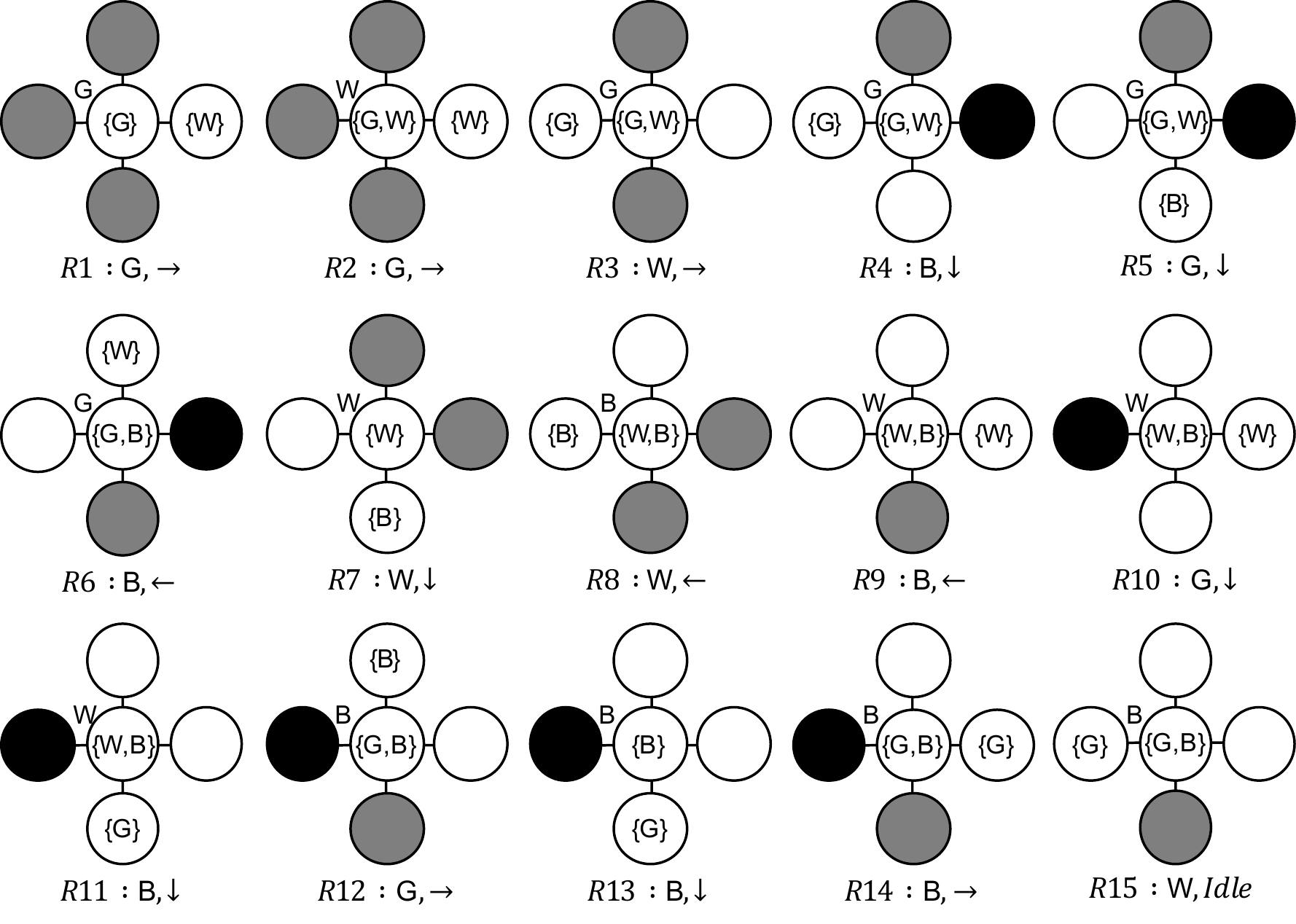}
\end{algorithmic}
\end{algorithm}

\paragraph{Proceeding east.}
The process of proceeding east is shown in Fig.\,\ref{ProceedEastA13T3}.
We use the same procedure as a ring exploration algorithm in \cite{Ooshita21:Ring}.
\begin{figure}[tbp]
\begin{center}
 \includegraphics[scale=0.8]{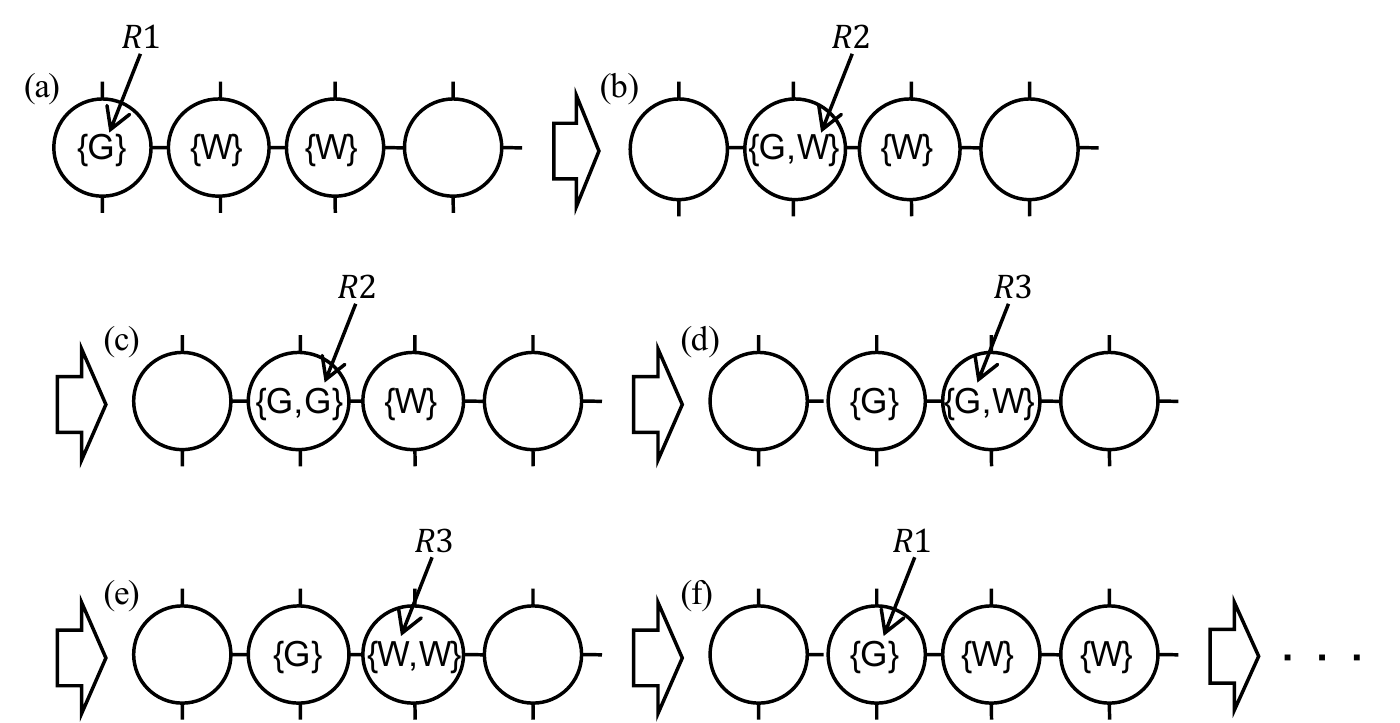}
\end{center}
\caption{Proceeding east in an execution of Algorithm\,\ref{algorithmA13T3}}
\label{ProceedEastA13T3}
\end{figure}
At the initial configuration or at a configuration immediately after turning east, robots make the form in Fig.\,\ref{ProceedEastA13T3}(a).
From this configuration, the robot with color $\G$ moves east by rule $R1$, and hence the configuration becomes one in Fig.\,\ref{ProceedEastA13T3}(b).
From this configuration, the robot with color $\W$ on a west node changes its color to $\G$ and moves east by rule $R2$.
In the ASYNC model, after it changes its color to $\G$, other robots may observe the intermediate configuration (Fig.\,\ref{ProceedEastA13T3}(c)).
However, there are no rules that the other robots can execute in the intermediate configuration.
Hence, the configuration becomes one in Fig.\,\ref{ProceedEastA13T3}(d).
From this configuration, the robot with color $\G$ on an east node changes its color to $\W$ and moves east by rule $R3$.
In the ASYNC model, after it changes its color to $\W$, other robots may observe the intermediate configuration (Fig.\,\ref{ProceedEastA13T3}(e)).
However, there are no rules that the other robots can execute in the intermediate configuration.
Hence, the configuration becomes one in Fig.\,\ref{ProceedEastA13T3}(f).
After that, robots proceed east while keeping the form by repeatedly executing those rules. 

\paragraph{Turning west.}
The process of turning west is shown in Fig.\,\ref{turnWestA13T3}.
\begin{figure}[tbp]
\begin{center}
 \includegraphics[scale=0.8]{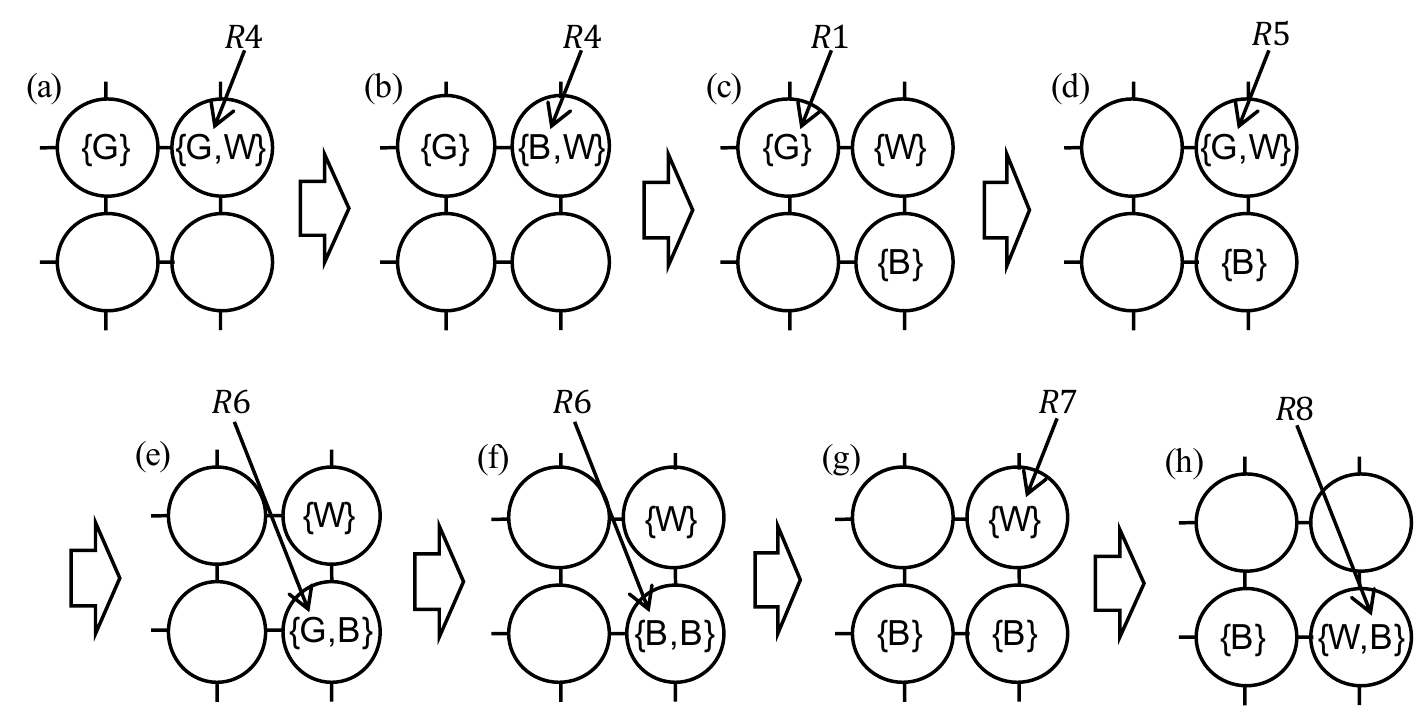}
\end{center}
\caption{Turning west in an execution of Algorithm\,\ref{algorithmA13T3}}
\label{turnWestA13T3}
\end{figure}
After robots proceed east, they reach the east end of the grid, and the configuration becomes one in Fig.\,\ref{turnWestA13T3}(a). 
From this configuration, the robot with color $\G$ on an east node changes its color to $\B$ and moves south by rule $R4$.
In the ASYNC model, after it changes its color to $\B$, other robots may observe the intermediate configuration (Fig.\,\ref{turnWestA13T3}(b)).
However, there are no rules that the other robots can execute in the intermediate configuration.
Hence, the configuration becomes one in Fig.\,\ref{turnWestA13T3}(c).
From this configuration, the robot with color $\G$ moves east by rule $R1$, and hence the configuration becomes one in Fig.\,\ref{turnWestA13T3}(d).
From this configuration, the robot with color $\G$ moves south by rule $R5$, and hence the configuration becomes one in Fig.\,\ref{turnWestA13T3}(e).
From this configuration, the robot with color $\G$ changes its color to $\B$ and moves west by rule $R6$.
In the ASYNC model, after it changes its color to $\B$, other robots may observe the intermediate configuration (Fig.\,\ref{turnWestA13T3}(f)).
However, there are no rules that the other robots can execute in the intermediate configuration.
Hence, the configuration becomes one in Fig.\,\ref{turnWestA13T3}(g).
From this configuration, the robot with color $\W$ moves south by rule $R7$, and hence the configuration becomes one in Fig.\,\ref{turnWestA13T3}(h).

\paragraph{Proceeding west.}
The process of proceeding west is similar to that of proceeding east. 
Robots with colors $\W$ and $\B$ for proceeding west move in the same way as robots with colors $\G$ and $\W$ for proceeding east, respectively. 
The form in Fig.\,\ref{turnWestA13T3}(h) corresponds to one in Fig.\,\ref{ProceedEastA13T3}(b).
Rules $R7$, $R8$, and $R9$ for proceeding west correspond to rules $R1$, $R2$, and $R3$ for proceeding east, respectively.
Hence, robots proceed west keeping the form by repeatedly executing those rules.

\paragraph{Turning east.}
The process of turning east is shown in Fig.\,\ref{turnEastA13T3}.
\begin{figure}[tbp]
\begin{center}
 \includegraphics[scale=0.8]{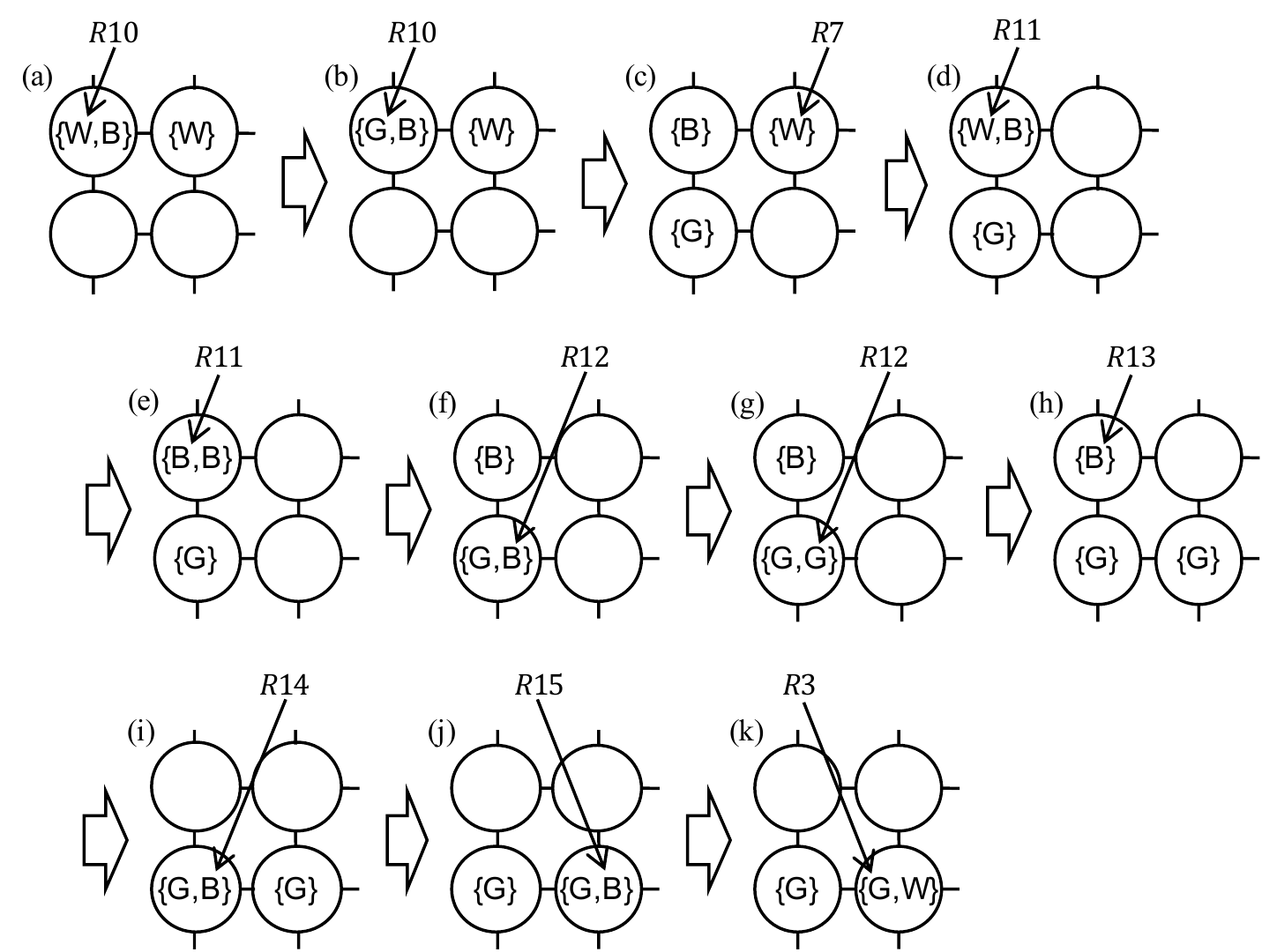}
\end{center}
\caption{Turning east in an execution of Algorithm\,\ref{algorithmA13T3}}
\label{turnEastA13T3}
\end{figure}
After robots proceed west, they reach the west end of the grid (Fig.\,\ref{turnEastA13T3}(a)). 
From this configuration, the robot with color $\W$ on a west node changes its color to $\G$ and moves south by rule $R10$.
In the ASYNC model, after it changes its color to $\W$, other robots may observe the intermediate configuration (Fig.\,\ref{turnEastA13T3}(b)).
However, there are no rules that the other robots can execute in the intermediate configuration.
Hence, the configuration becomes one in Fig.\,\ref{turnEastA13T3}(c).
From this configuration, the robot with color $\W$ moves west by rule $R7$, and hence the configuration becomes one in Fig.\,\ref{turnEastA13T3}(d).
From this configuration, the robot with color $\W$ changes its color to $\B$ and moves south by rule $R11$.
In the ASYNC model, after it changes its color to $\B$, other robots may observe the intermediate configuration (Fig.\,\ref{turnEastA13T3}(e)).
However, there are no rules that the other robots can execute in the intermediate configuration.
Hence, the configuration becomes one in Fig.\,\ref{turnEastA13T3}(f).
From this configuration, the robot with color $\B$ on a south node changes its color to $\G$ and moves east by rule $R12$.
In the ASYNC model, after it changes its color to $\G$, other robots may observe the intermediate configuration (Fig.\,\ref{turnEastA13T3}(g)).
However, there are no rules that the other robots can execute in the intermediate configuration.
Hence, the configuration becomes one in Fig.\,\ref{turnEastA13T3}(h).
From this configuration, the robot with color $\B$ moves south by rule $R13$, and hence the configuration becomes one in Fig.\,\ref{turnEastA13T3}(i).
From this configuration, the robot with color $\B$ moves east by rule $R14$, and hence the configuration becomes one in Fig.\,\ref{turnEastA13T3}(j).
From this configuration, the robot with color $\B$ changes its color to $\W$ by rule $R15$, and hence the configuration becomes one in Fig.\,\ref{turnEastA13T3}(k).
From this configuration, robots can proceed east again since their form is the same as one in Fig.\,\ref{ProceedEastA13T3}(d).

\paragraph{End of exploration.}
In case that $m$ is odd, robots visit the south end nodes while proceeding east.
Eventually, the configuration becomes $\{(v_{m-1,n-2},\{\G\}),(v_{m-1,n-1},\{\G,\W\})\}$.
At this configuration, no robots are enabled.
In case that $m$ is even, robots visit the south end nodes while proceeding east.
Eventually, the configuration becomes $\{(v_{m-1,0},\{\W,\B\}),(v_{m-1,1},\{\W\})\}$.
At this configuration, no robots are enabled.

\subsubsection{$\phi=1$, $\ell=3$, no common chirality, and $k=6$}
\label{secA13F6}
We give a terminating exploration algorithm for $m\times n$ grids $(m\geq3, n\geq3)$ in case of $\phi=1$, $\ell=3$, no common chirality, and $k=6$.
A set of colors is $Col=\{\G, \W, \B\}$.
The algorithm is given in Algorithm \ref{algorithmA13F6}.

\begin{algorithm}[tbp]
\caption{Asynchronous Terminating Exploration for $\phi=1,\,\ell=3,$ $k=6$ Without Common Chirality}
\label{algorithmA13F6}
\begin{algorithmic}
\renewcommand{\algorithmicrequire}{\textbf{Initial configuration}}
\REQUIRE
\STATE $\{(v_{0,0},\{\G\}),(v_{0,1},\{\W\}),(v_{0,2},\{\W\}),(v_{1,0},\{\W,\B\}),(v_{1,1},\{\W\})\}$
\renewcommand{\algorithmicrequire}{\textbf{Rules}}
\REQUIRE
\STATE
  \centering
  \includegraphics[width=0.95\textwidth]{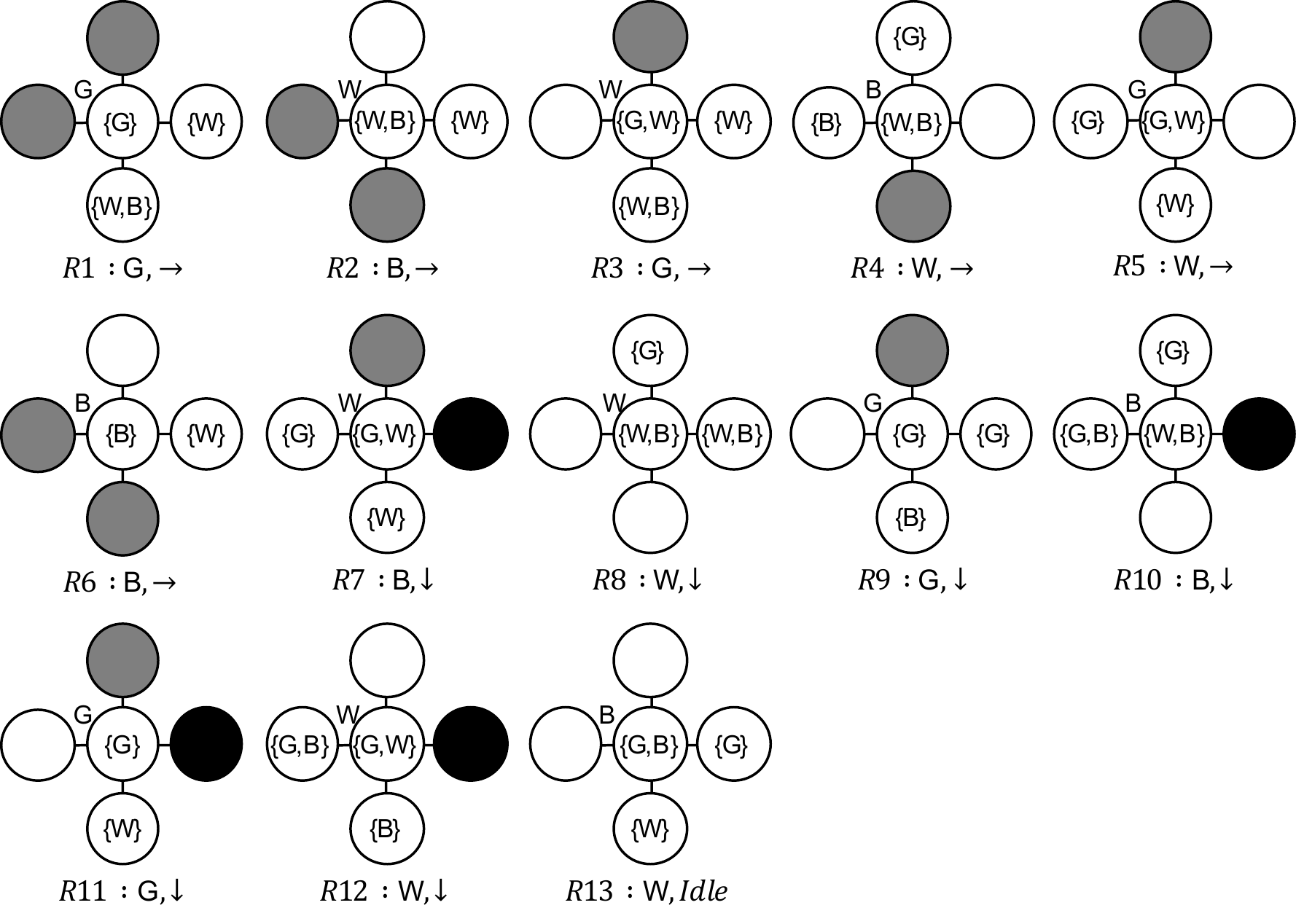}
\end{algorithmic}
\end{algorithm}

\paragraph{Proceeding east.}
The process of proceeding east is shown in Fig.\,\ref{ProceedEastA13F6-1} and Fig.\,\ref{ProceedEastA13F6-2}.
\begin{figure}[tbp]
\begin{center}
 \includegraphics[scale=0.8]{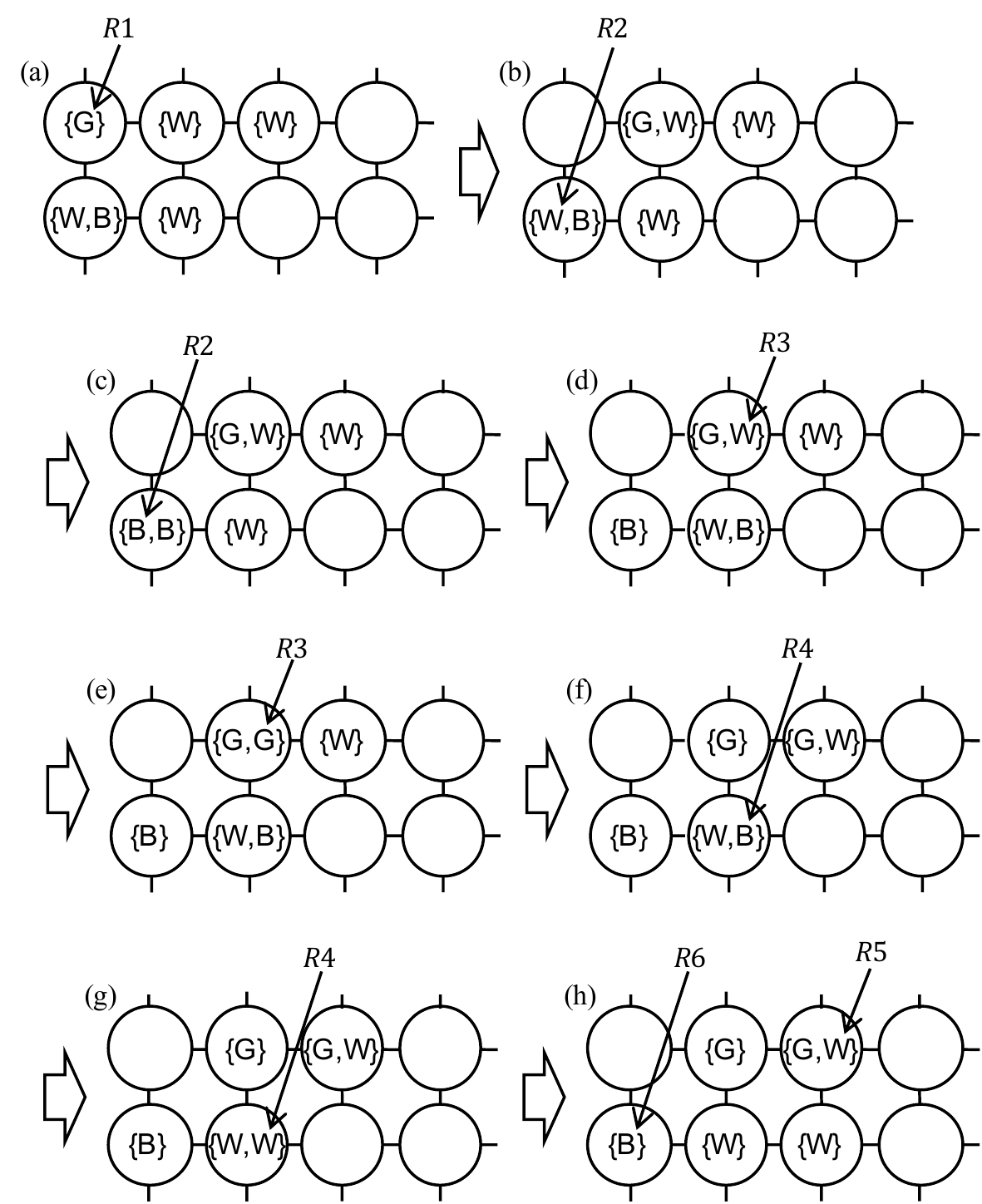}
\end{center}
\caption{Proceeding east in executions of Algorithm\,\ref{algorithmA13F6} (I)}
\label{ProceedEastA13F6-1}
\end{figure}
\begin{figure}[tbp]
\begin{center}
 \includegraphics[scale=0.8]{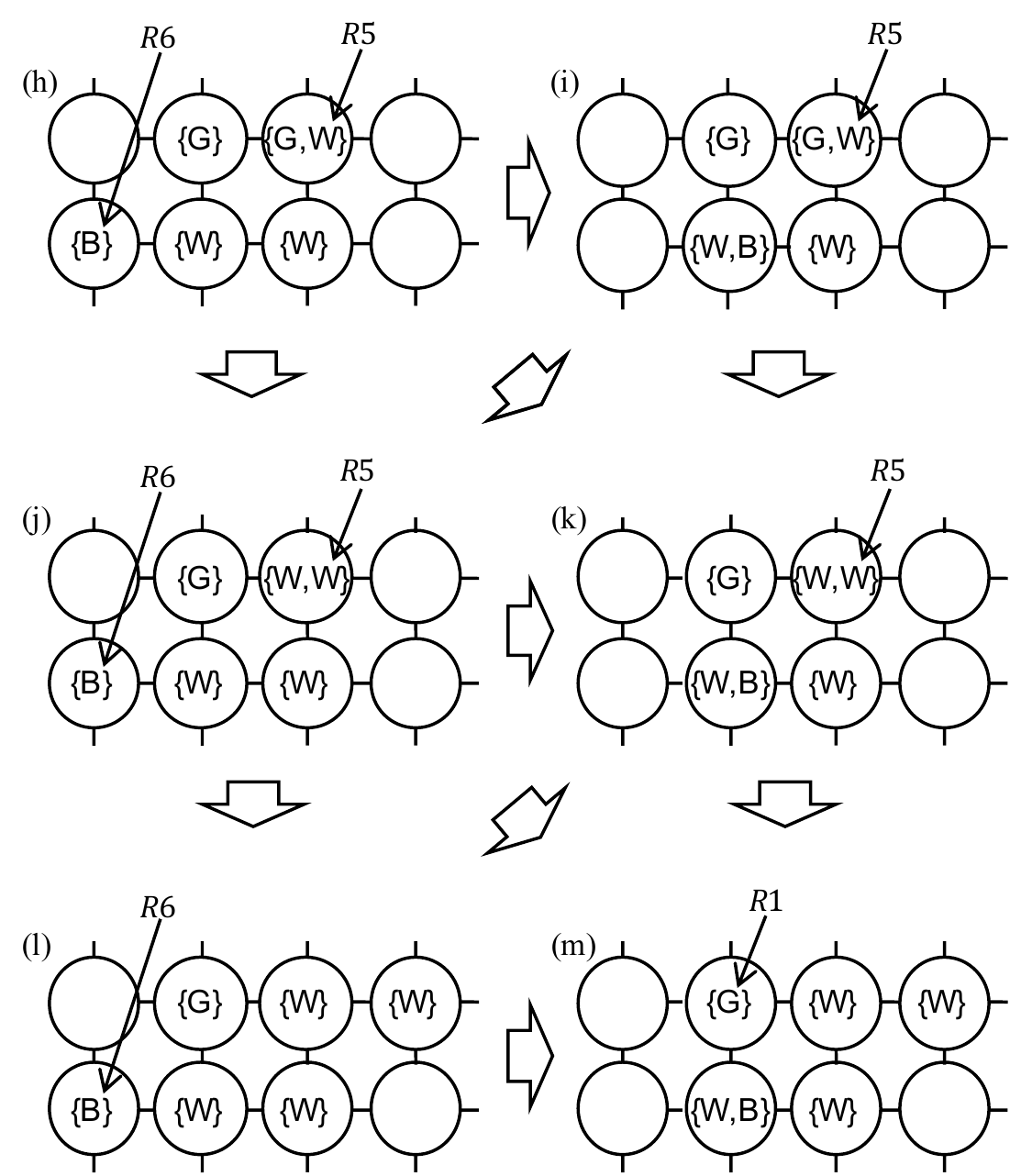}
\end{center}
\caption{Proceeding east in executions of Algorithm\,\ref{algorithmA13F6} (I\hspace{-1pt}I)}
\label{ProceedEastA13F6-2}
\end{figure}
At the initial configuration or at a configuration immediately after turning east, robots make the form in Fig.\,\ref{ProceedEastA13F6-1}(a).
From this configuration, the robot with color $\G$ moves east by rule $R1$, and hence the configuration becomes one in Fig.\,\ref{ProceedEastA13F6-1}(b).
From this configuration, the robot with color $\W$ on a west node changes its color to $\B$ and moves east by rule $R2$.
In the ASYNC model, after it changes its color to $\B$, other robots may observe the intermediate configuration (Fig.\,\ref{ProceedEastA13F6-1}(c)).
However, there are no rules that the other robots can execute in the intermediate configuration.
Hence, the configuration becomes one in Fig.\,\ref{ProceedEastA13F6-1}(d).
From this configuration, the robot with color $\W$ occupying the same node as the robot with color $\G$ changes its color to $\G$ and moves east by rule $R3$.
In the ASYNC model, after it changes its color to $\G$, other robots may observe the intermediate configuration (Fig.\,\ref{ProceedEastA13F6-1}(e)).
However, there are no rules that the other robots can execute in the intermediate configuration.
Hence, the configuration becomes one in Fig.\,\ref{ProceedEastA13F6-1}(f).
From this configuration, the robot with color $\B$ occupying the same node as the robot with color $\W$ changes its color to $\W$ and moves east by rule $R4$.
In the ASYNC model, after it changes its color to $\W$, other robots may observe the intermediate configuration (Fig.\,\ref{ProceedEastA13F6-1}(g)).
However, there are no rules that the other robots can execute in the intermediate configuration.
Hence, the configuration becomes one in Fig.\,\ref{ProceedEastA13F6-1}(h).

Fig.\,\ref{ProceedEastA13F6-2}(h) denotes the same configuration as one in Fig.\,\ref{ProceedEastA13F6-1}(h).
We show that the configuration eventually becomes one in Fig.\,\ref{ProceedEastA13F6-2}(m) regardless of the scheduler.
At the configuration in Fig.\,\ref{ProceedEastA13F6-2}(h), let $r_1$ be the robot with color $\W$ on a northeast node and let $r_2$ be the robot with color $\B$.
Then, $r_1$ can execute rule $R5$, and $r_2$ can execute rule $R6$.
If $r_2$ finishes $R6$ before $r_1$ finishes the compute phase of $R5$, the configuration becomes one in Fig.\,\ref{ProceedEastA13F6-2}(i).
If $r_1$ finishes the compute phase of $R5$ before $r_2$ finishes $R6$, the configuration becomes one in Fig.\,\ref{ProceedEastA13F6-2}(j).
If $r_1$ finishes the compute phase of $R5$ and $r_2$ finishes $R6$ at the same time, the configuration becomes one in Fig.\,\ref{ProceedEastA13F6-2}(k).
At the configurations in Fig.\,\ref{ProceedEastA13F6-2}(i) and Fig.\,\ref{ProceedEastA13F6-2}(k), robots cannot execute rules except $R5$, and hence the configuration eventually becomes one in Fig.\,\ref{ProceedEastA13F6-2}(m).
At the configuration in Fig.\,\ref{ProceedEastA13F6-2}(j), robots cannot execute rules except $R5$ and $R6$.
From this configuration, if $r_2$ finishes $R6$ before $r_1$ finishes $R5$, the configuration becomes one in Fig.\,\ref{ProceedEastA13F6-2}(k).
If $r_1$ finishes $R5$ before $r_2$ finishes $R6$, the configuration becomes one in Fig.\,\ref{ProceedEastA13F6-2}(l).
If $r_1$ finishes $R5$ and $r_2$ finishes $R6$ at the same time, the configuration becomes one in Fig.\,\ref{ProceedEastA13F6-2}(m).
At the configurations in Fig.\,\ref{ProceedEastA13F6-2}(l), robots cannot execute rules except $R6$, and hence the configuration eventually becomes one in Fig.\,\ref{ProceedEastA13F6-2}(m).
From the above discussion, the configuration eventually becomes one in Fig.\,\ref{ProceedEastA13F6-2}(m) in any case.
In this configuration, the form of robots is the same as in Fig.\,\ref{ProceedEastA13F6-1}(a).
Hence, robots proceed east while keeping their form by repeatedly executing those rules. 

\paragraph{Turning west.}
The process of turning west is shown in Fig.\,\ref{turnWestA13F6-1} and Fig.\,\ref{turnWestA13F6-2}.
\begin{figure}[tbp]
\begin{center}
 \includegraphics[scale=0.8]{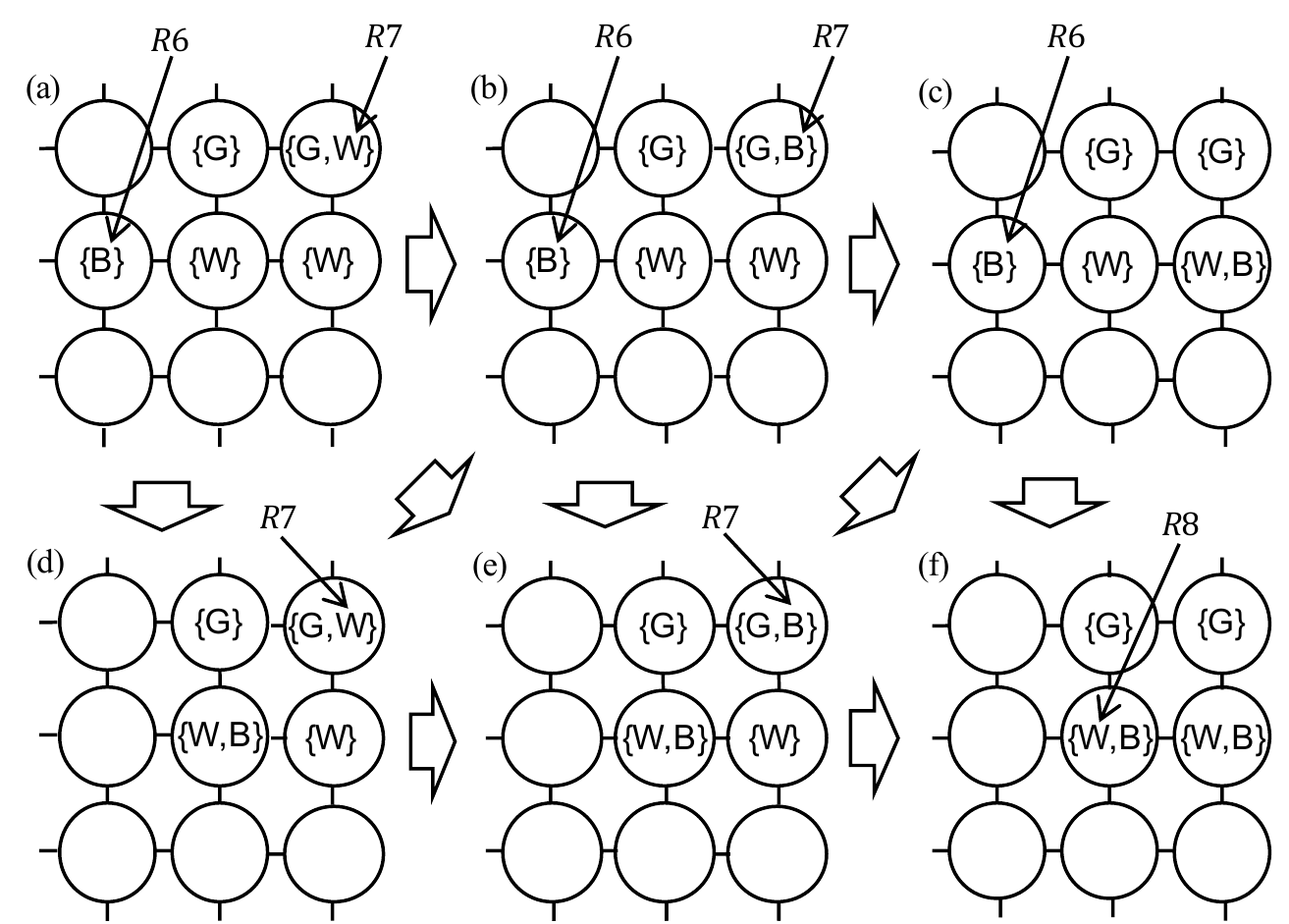}
\end{center}
\caption{Turning west in an execution of Algorithm\,\ref{algorithmA13F6} (I)}
\label{turnWestA13F6-1}
\end{figure}
\begin{figure}[tbp]
\begin{center}
 \includegraphics[scale=0.8]{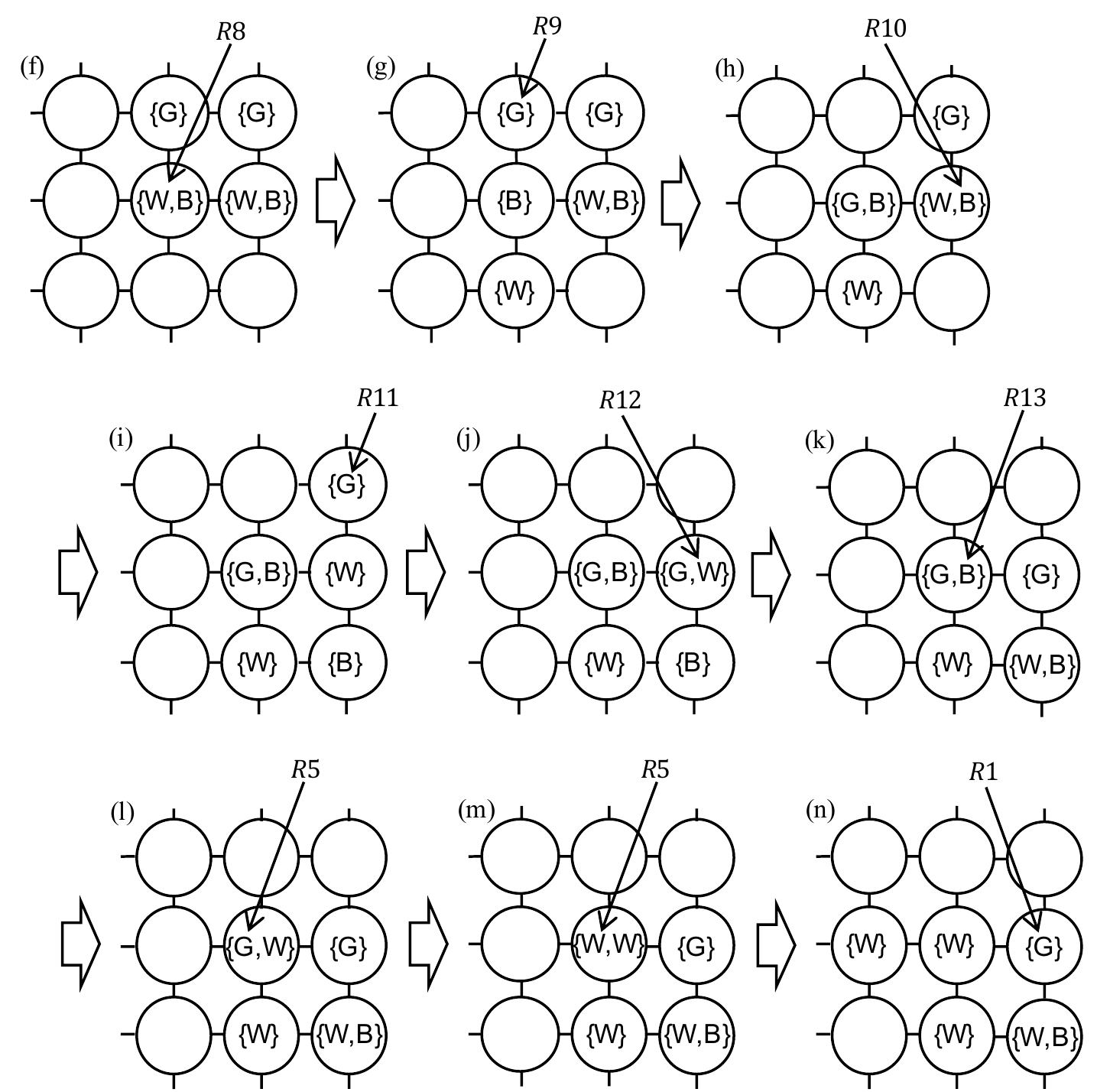}
\end{center}
\caption{Turning west in an execution of Algorithm\,\ref{algorithmA13F6} (I\hspace{-1pt}I)}
\label{turnWestA13F6-2}
\end{figure}
After robots proceed east, they reach the east end of the grid, and the configuration becomes one in Fig.\,\ref{turnWestA13F6-1}(a). 
At this configuration, let $r_1$ be the robot with color $\B$, and let $r_2$ be the robot with color $\G$ on a northeast node.
Then, $r_1$ can execute rule $R6$, and $r_2$ can execute rule $R7$.
If $r_2$ finishes the compute phase of $R7$ before $r_1$ finishes $R6$, the configuration becomes one in Fig.\,\ref{turnWestA13F6-1}(b).
If $r_1$ finishes $R6$ before $r_2$ finishes the compute phase of $R7$, the configuration becomes one in Fig.\,\ref{turnWestA13F6-1}(d).
If $r_1$ finishes $R6$ and $r_2$ finishes the compute phase of $R7$ at the same time, the configuration becomes one in Fig.\,\ref{turnWestA13F6-1}(e).
At the configurations in Fig.\,\ref{turnWestA13F6-1}(d) and Fig.\,\ref{turnWestA13F6-1}(e), robots cannot execute rules except $R7$, and hence the configuration eventually becomes one in Fig.\,\ref{turnWestA13F6-1}(f).
At the configuration in Fig.\,\ref{turnWestA13F6-1}(b), robots cannot execute rules except $R6$ and $R7$.
From this configuration, if $r_2$ finishes $R7$ before $r_1$ finishes $R6$, the configuration becomes one in Fig.\,\ref{turnWestA13F6-1}(c).
If $r_1$ finishes $R6$ before $r_2$ finishes $R7$, the configuration becomes one in Fig.\,\ref{turnWestA13F6-1}(e).
If $r_1$ finishes $R6$ and $r_2$ finishes $R7$ at the same time, the configuration becomes one in Fig.\,\ref{turnWestA13F6-1}(f).
At the configuration in Fig.\,\ref{turnWestA13F6-1}(c), robots cannot execute rules except $R6$, and hence the configuration eventually becomes one in Fig.\,\ref{turnWestA13F6-1}(f).

Fig.\,\ref{turnWestA13F6-2}(f) denotes the same configuration as one in Fig.\,\ref{turnWestA13F6-1}(f).
From this configuration, the robot with color $\W$ on a southwest node moves south by rule $R8$, and hence the configuration becomes one in Fig.\,\ref{turnWestA13F6-2}(g).
From this configuration, the robot with color $\G$ on a northwest node moves south by rule $R9$, and hence the configuration becomes one in Fig.\,\ref{turnWestA13F6-2}(h).
From this configuration, the robot with color $\B$ on an east node moves south by rule $R10$, and hence the configuration becomes one in Fig.\,\ref{turnWestA13F6-2}(i).
From this configuration, the robot with color $\G$ on an east node moves south by rule $R11$, and hence the configuration becomes one in Fig.\,\ref{turnWestA13F6-2}(j).
From this configuration, the robot with color $\W$ on an east node moves south by rule $R12$, and hence the configuration becomes one in Fig.\,\ref{turnWestA13F6-2}(k).
From this configuration, the robot with color $\B$ on a west node changes its color to $\W$ by rule $R13$, and hence the configuration becomes one in Fig.\,\ref{turnWestA13F6-2}(l).
From this configuration, the robot with color $\G$ on a northwest node changes its color to $\W$ and moves west by rule $R5$.
In the ASYNC model, after it changes its color to $\W$, other robots may observe the intermediate configuration (Fig.\,\ref{turnWestA13F6-2}(m)).
However, there are no rules that the other robots can execute in the intermediate configuration.
Hence, the configuration becomes one in Fig.\,\ref{turnWestA13F6-2}(n).

\paragraph{Proceeding west and turning east.}
The form of robots in Fig.\,\ref{turnWestA13F6-2}(n) is a mirror image of the one that robots make to proceed east.
Hence, robots proceed west and turn east with the same rules as proceeding east and turning west, respectively.

\paragraph{End of exploration.}
In case that $m$ is odd, robots visit the south end nodes while proceeding west.
Eventually, the configuration becomes $\{(v_{m-2,0},\{\G\}),(v_{m-2,1},\{\G\}),(v_{m-1,0},\{\W,\B\}),(v_{m-1,1},\{\W,\B\})\}$.
At this configuration, no robots are enabled.
In case that $m$ is even, robots terminate the algorithm similarly to the odd case.

\section{Conclusions}
In this paper, we have investigated terminating exploration algorithms for myopic robots in finite grids.
First, we have proved that, in the SSYNC and ASYNC models, three myopic robots are necessary to achieve the terminating exploration of a grid if $\phi=1$ holds.
Second, we have proposed fourteen algorithms to achieve the terminating exploration of a grid in various assumptions of synchrony, visible distance, the number of colors, and a chirality.
To the best of our knowledge, they are the first algorithms that achieve the terminating exploration of a grid by myopic robots with at most three colors and/or with no common chirality.
In addition, six proposed algorithms are optimal in terms of the number of robots.

For the future work, it is interesting to close the gap between the lower and upper bounds of the number of required robots.
It is also interesting to consider other tasks and topologies with myopic luminous robots.

\bibliographystyle{plain}
\bibliography{gridpaper-c}

\end{document}